
\documentclass[journal]{IEEEtran}

%

\usepackage{amsmath,graphicx,url,bbm,comment,amsfonts,enumitem,subcaption,lipsum}
\usepackage{color}
\usepackage[utf8]{inputenc}
\usepackage[english]{babel}

\usepackage{amsthm,soul,color}
\usepackage{eso-pic}
 
\theoremstyle{definition}
\newtheorem{definition}{Definition}
\newtheorem{theorem}{Theorem}
\usepackage[square, comma, sort&compress,numbers]{natbib} 

%

%

%
\ifCLASSINFOpdf
\else
\fi
\hyphenation{op-tical net-works semi-conduc-tor}

\begin{document}
 \AddToShipoutPicture*{\small \sffamily\raisebox{27.0cm}{\hspace{2.7cm}
 Published in IEEE Transactions on Image Processing (DOI: 10.1109/TIP.2021.3060875).  Copyright \copyright \hspace{1mm} 2021 IEEE.}}
%
\title{Pareto-Optimal Bit Allocation for Collaborative Intelligence}
%
%
%
\author{Saeed Ranjbar~Alvar,~\IEEEmembership{Student Member,~IEEE,}
        and Ivan V. Baji\'c,~\IEEEmembership{Senior Member,~IEEE}
        \thanks{The authors are with the School of Engineering Science, Simon Fraser University, Burnaby, BC, Canada. E-mail: saeedr@sfu.ca, ibajic@ensc.sfu.ca}}
\maketitle

\begin{abstract}
In recent studies, collaborative intelligence (CI) has emerged as a promising framework for deployment of Artificial Intelligence (AI)-based services on mobile/edge devices. In CI, the AI model (a deep neural network) is split between the edge and the cloud, and intermediate features are sent from the edge sub-model to the cloud sub-model.  
In this paper, we study bit allocation for feature coding in multi-stream CI  systems. We model task distortion as a function of rate using convex surfaces similar to those found in distortion-rate theory. Using such models, we are able to provide closed-form bit allocation solutions for single-task systems and scalarized multi-task systems. Moreover, we provide analytical characterization of the full Pareto set for 2-stream $k$-task systems, and bounds on the Pareto set for 3-stream 2-task systems. Analytical results are examined on a variety of DNN models from the literature to demonstrate wide applicability of the results. 
\end{abstract}

\begin{IEEEkeywords}
bit allocation, rate distortion optimization,  collaborative intelligence, multi objective optimization, deep learning, multi-task learning
\end{IEEEkeywords}

%
\IEEEpeerreviewmaketitle

\section{Introduction}
%
%
%
%
\IEEEPARstart{O}{ver} the last few years, the emergence of high-end mobile/edge devices with AI hardware opened new doors for AI-based applications ``at the edge''~\cite{ai_benchmark}. However, the limitations on battery and processing power restrict the size and sophistication of deep neural networks (DNNs) that can be deployed on edge devices. As a result, the current default approach for running high-end analytics on signals collected at the edge is to transfer the signals from the edge to the cloud, run the inference in the cloud, and then send the results back to the edge device if needed. 

In~\cite{neurosurgeon,collab_offloading,jointdnn, IoT_partition} it was shown that for many DNN models, partitioning the computations between the edge and cloud can decrease the inference latency and improve energy efficiency. 
We refer to this approach  as \emph{collaborative intelligence} (CI). 
In CI, initial layers of a DNN comprise the \emph{edge sub-model} and are deployed on the edge device. The deep feature tensors obtained from the edge sub-model are transferred to the cloud, where the remainder of the DNN, called the \emph{cloud sub-model}, completes the inference task. Besides energy and latency savings, another advantage of CI systems compared to the conventional cloud-only approach is their potential for higher privacy, because the original signals never leave the edge device -- only their latent representations (deep features) are transmitted to the cloud.   

To efficiently utilize the communication channel in CI, the feature tensors obtained from the edge sub-model need to be compressed prior to transmission to the cloud~\cite{intelligent_sensing}. For single-stream DNNs such as VGG~\cite{VGG}, only one feature tensor needs to be compressed, so the entire available bit budget is assigned to this tensor. However, for multi-stream DNNs, which include many recent high-performance models (e.g., models based on residual blocks~\cite{resnet,resnet_variant1,resnet_variant2}, models based on dense blocks~\cite{densenet,densenet_variant1,densenet_variant2}, etc.), it may be necessary to compress multiple tensors. A natural question then arises - how do we allocate bits among these tensors to achieve the best performance? This paper offers answers to the above question. 

Depending on the number of inference tasks they need to support, CI systems can be divided into single-task and multi-task systems. Bit allocation for single-task CI systems can be formulated as a single-objective optimization problem (SOP) and solved using Lagrangian methods. On the other hand, bit allocation for multi-task CI systems is a multi-objective optimization problem (MOP) with possibly many solutions under the notion of Pareto optimality~\cite{nonlinear_opt_book}. In our recent work~\cite{our_icassp}, the MOP related to multi-task bit allocation was converted to SOP through scalarization~\cite{Chong_Zak_Optim_2013}, and then solved using Lagrangian methods. In this paper, we present a more complete solution to the related SOP, including explicit non-negative rate constraints. Moreover, we analyze the Pareto-optimal set of solutions for the related MOP, which contains all scalarized solutions, and give an analytic characterization of the Pareto set in certain cases. 
In summary, the main contributions of this work are as follows:
\begin{itemize}
  \item A closed-form bit allocation solution for multi-stream single-task CI systems involving explicit non-negative rate constraints  (Section~\ref{sec:sop}).
  \item A closed-form bit allocation solution for scalarized multi-stream multi-tasks CI systems with explicit non-negative rate constraints (Section~\ref{sec:mop}).
  \item Analytic characterization of the Pareto set for 2-stream multi-task CI systems (Section~\ref{sec:Pareto_2xk}). 
  \item Bounds on the Pareto set for 3-stream 2-task CI systems (Section~\ref{sec:Pareto_3x2}).
  \item Experimental validation of the presented solutions on several recent DNN models (Section~\ref{sec:experiments})
\end{itemize}


The paper is structured as follows. Recent work related to CI and deep feature compression is reviewed in Section~\ref{sec:related_work}. Section~\ref{sec:preliminaries} introduces several concepts, such as the notion of task distortion and the distortion model, that will be used later in the paper. Section~\ref{sec:sop} studies bit allocation for multi-stream single task CI systems, while Section~\ref{sec:mop} analyzes multi-stream multi-task systems.  Section~\ref{sec:experiments} presents experimental validation of the derived solution, followed by conclusions in Section~\ref{sec:conclusion}.
\section{Related work}
\label{sec:related_work}
Most of the existing work on deep feature compression is focused on single-stream systems, where one feature tensor needs to be encoded. In~\cite{choi_icip}, High Efficiency Video Coding (HEVC) Range extension (RExt)~\cite{HEVC_rext} is used to encode the the intermediate feature tensor from the YOLOv2 object detector~\cite{YOLO9000}. A fine turning step called ``compression-augmented training'' is also introduced 
to mitigate the drop of performance caused by intermediate feature compression. 
A lightweight lossy codec based on entropy-constrained quantization of the deep feature is proposed in~\cite{lightweight}. In~\cite{choi_lossless}, a near-lossless compression of YOLOv2 features is presented. 
Lossless mode of HEVC~\cite{HEVC} is used to encode the feature tensors that are organized into a tiled image. The authors also introduced the idea of a reconstruction branch, which is able to reconstruct an approximation to the original image from compressed features. 
In~\cite{saeed_icip}, a loss function encouraging feature compressibility is developed and used as a part of the overall training loss for a multi-task model, resulting in more compressible features.

Authors in~\cite{intelligent_sensing,intelligent_sensing_2} describe general lossy and lossless codecs for deep feature compression. Their focus is on the features of popular DNN backbones rather than task-specific features. The authors also highlighted the need for standardization of deep feature coding and tried to provide a baseline for further research and standardization activities. In fact, a related standardization activity has recently started within the MPEG Video Coding for Machines (MPEG-VCM) ad-hoc group~\cite{MPEG-VCM_DCfE}, where the goal is to develop technologies for efficient compression of DNN intermediate features. 

While the above mentioned works exploited statistical redundancies for compression of deep feature tensors, another group of works focused on tensor dimension reduction as a means for compression~\cite{bottlenet,butterfly,bottlenet++,BaF}. In~\cite{butterfly}, an autoencoder model named  ``butterfly unit'' is utilized for feature tensor dimension reduction. In~\cite{bottlenet}, a method similar to compression-augmented training is proposed to compensate the potential accuracy loss due to the lossy compression of deep features. In~\cite{bottlenet++} another dimension reduction unit is developed based on joint source-channel coding. The back-and-forth prediction method in~\cite{BaF} enables the encoding of only a subset of tensor channels, from which the other, non-coded channels can be reconstructed.  

Due to their focus on single tensor compression, none of the studies mentioned above consider optimal bit allocation to multiple tensors. Even in~\cite{intelligent_sensing,intelligent_sensing_2} where compression of multiple features is considered, the compression is performed without joint bit allocation. The main contribution of the present paper are the solutions to bit allocation problems in several multi-stream CI scenarios. 
It should be noted that bit allocation for DNN compression has recently been studied in~\cite{Girod_ICIP2019}, where the authors propose a strategy for allocating bits to both weights and activations of a DNN, with a focus on single-stream, single-task DNNs. The present paper differs from~\cite{Girod_ICIP2019} in several ways: (1) our proposed approach supports multi-stream single-task and multi-stream multi-task models; (2) we focus on bit allocation among multiple feature tensors (i.e., activations), while the weights of the original DNN are unchanged; (3) we provide a convex approximation to the model's distortion-rate surface, which allows us to find closed-form solutions and analytically characterize the Pareto set for the bit allocation problem, unlike~\cite{Girod_ICIP2019}, where the solution is found by numerical search.
 
It should also be noted that multi-objective optimization for scalable video coding has been studied in~\cite{sc_pareto}. In particular, the authors attempt to characterize the \emph{Pareto front} in the distortion space, which is the set of all distortion points that can be achieved by Pareto-optimal bit allocations. Apart from the obvious fact that our framework considers multi-stream multi-task models for CI, rather than video coding, another important difference between~\cite{sc_pareto} and this work is that we characterize the \emph{Pareto set}, i.e, the set of Pareto-optimal rates, in certain cases. 
\section{Preliminaries}
\label{sec:preliminaries}
\subsection{Classification of CI systems}
Depending on the number of tasks it is meant to support, the architecture of the underlying DNN, and the split point between the edge sub-model and the cloud sub-model, a CI system will fall into one of the following categories.
\begin{itemize} 
\item \textbf{Single-stream, single-task systems:} one feature tensor is  transferred from the edge to the cloud to support one inference task, as shown in Fig.~\ref{fig:CNNs}(a); an example of a DNN that would fit into this scenario is VGG~\cite{VGG}. 
\item \textbf{Single-stream, multi-task systems:} one feature tensor is transferred from the edge  to the cloud to support multiple tasks, as shown in Fig.~\ref{fig:CNNs}(b); examples of DNNs for such systems are~\cite{saeed_icip, NeurIPS_Sener}. 
\item \textbf{Multi-stream, single-task systems:}  multiple tensors are transferred from the edge to the cloud to support a single task, as shown in Fig.~\ref{fig:CNNs}(c); an example of a DNN for such systems is DenseNet~\cite{densenet}. 
\item \textbf{Multi-stream, multi-task systems:} multiple tensors are transferred from the edge to the cloud to support  multiple tasks, as shown in Fig.~\ref{fig:CNNs}(d); an example of a DNN for such systems is~\cite{branched}. 
\end{itemize}

\begin{figure}[]
\begin{tabular}{c}
\begin{subfigure}{\columnwidth}\centering\includegraphics[width=7.4cm]{./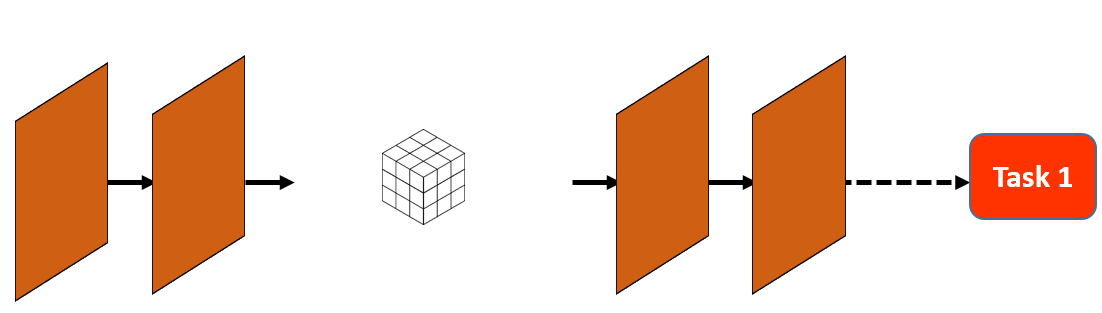}\caption{}\label{}\end{subfigure} \\
\begin{subfigure}{\linewidth}\centering\includegraphics[width=7.4cm]{./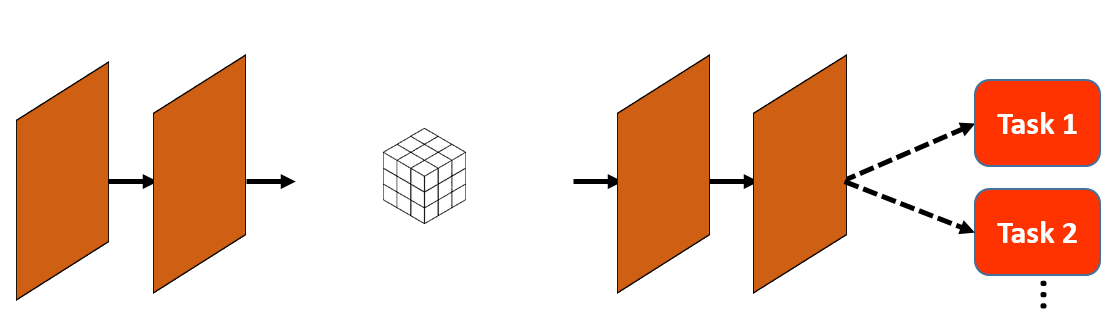}\caption{}\label{}\end{subfigure} \\
\begin{subfigure}{\columnwidth}\centering\includegraphics[width=7.4cm]{./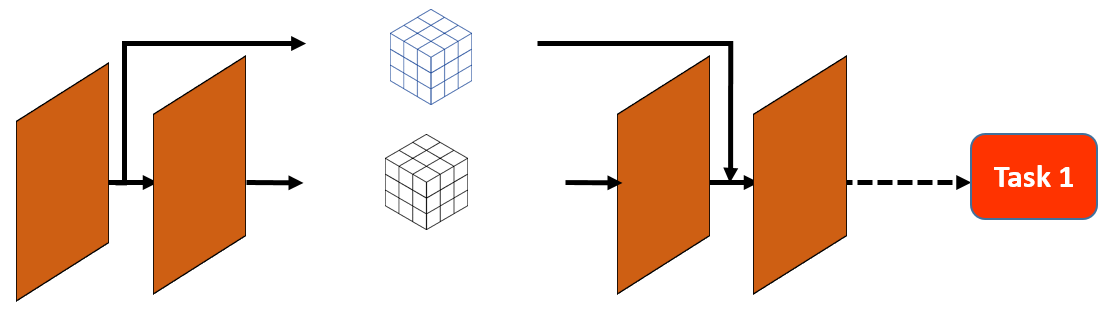}\caption{}\label{}\end{subfigure}  \\
\begin{subfigure}{\linewidth}\centering\includegraphics[width=7.4cm]{./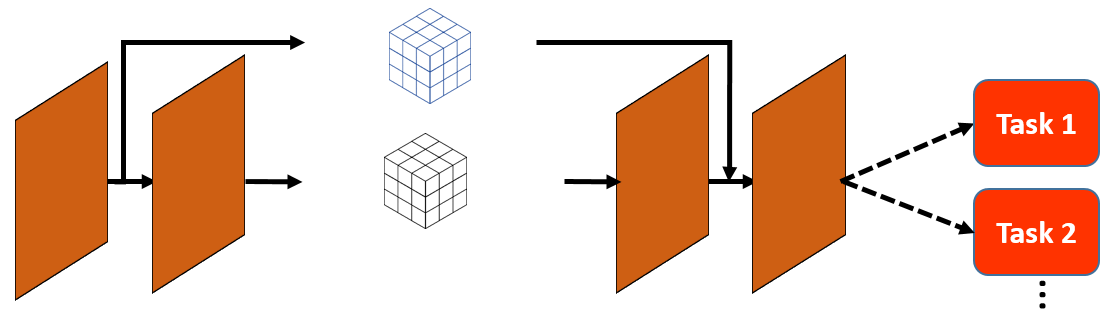}\caption{}\label{}\end{subfigure} \\
\end{tabular}
\caption{CI systems: (a) single-stream single-task; (b) single-stream multi-task; (c) multi-stream single-task; (d) multi-stream multi-task.}
\label{fig:CNNs}
\end{figure}

In single-stream systems, only one tensor is transferred from the edge to the cloud, so the entire bit budget is allocated to that tensor. However, in multi-stream systems, two or more tensors are transferred, so the crucial problem is how to allocate bits to these tensors in order to maximize the CI system performance. Not all tensors are equally important for the accuracy of the inference task(s), and the bits should be allocated accordingly. In order to be able to measure the impact of tensor compression on task accuracy, we first define task distortion and then propose a distortion-rate model.  

\subsection{Task distortion}
\label{subsec:distortion}
CI systems need to support a number of different inference tasks. In some cases, task-specific accuracy is quantified using a measure that increases with accuracy, for example classification accuracy for image classification~\cite{VGG}, mean Average Precision (mAP) for object detection~\cite{YOLO9000}, and so on. In other cases, a measure that decreases with accuracy is used, for example Root Mean Squared Error (RMSE) for disparity estimation~\cite{metrics}. These accuracy measures also have different scales, making their comparison difficult. It is therefore important to define task distortion that would map all these measures to a set of more comparable values and ensure uniform behavior (either increasing or decreasing) as the accuracy degrades. Below we define one such task distortion.

Consider a DNN model with $k\geq1$ tasks. 
Let $\overline{A_i}$ be the the model's average performance on the $i$-th task, on a given dataset, without tensor compression. We define the task-specific distortion as the fraction of the performance drop relative to the case where no compression is applied to the feature tensors. Let $A_i$ be the average performance with tensor compression on the same dataset. Then the distortion for task $i$ is defined as
\begin{equation}
\label{eq:dist_def}
    D_i = \frac{|\overline{A_i} - A_i|}{\overline{A_i}}\times100.
\end{equation}
Note that $D_i = 0$ if the accuracy with compression ($A_i$) matches the accuracy without compression ($\overline{A_i}$), and increases as $A_i$ starts to deviate from $\overline{A_i}$. $D_i$ can be interpreted as a percentage drop in performance due to feature compression. 

\subsection{Distortion-rate model}
Let $\mathbf{R}=(R_1,R_2,...,R_N)$ be the vector of bit rates for the $N$ tensors to be compressed in a multi-stream CI system. We model the dependence of task distortion on these rates using monotonically-decreasing convex surfaces given by:       
\begin{equation}
\label{eq:dist_total_form}
    D_i(\mathbf{R}) = D_i(R_1,...,R_N) \approx \gamma_i + \sum_{j=1}^{N} \alpha_{i,j} 2^{-\beta_{i,j} R_{j}},
\end{equation}
where 
$\gamma_i$, $\alpha_{i,j}>0$ and $\beta_{i,j}>0$ are surface parameters. 
In our experiments, we used non-linear least squares method based on Levenberg-Marquardt algorithm \cite{scipy_fit} 
to fit the surface~(\ref{eq:dist_total_form}) to the measured distortion-rate points. 

There are several reasons for using such a distortion-rate model. First, the model is quite accurate in approximating measured distortion-rate points. As an example, Fig.~\ref{fig:surface_class} shows a fitted surface for a single-task model (DenseNet~\cite{densenet}), with two tensors to be coded (hence, two rates). As seen in the figure, the agreement between the original points and the fitted surface is quite good. This is further confirmed quantitatively using the coefficient of multiple determination $R^2$~\cite{Neter_etal_1988}, which, for the surface in Fig.~\ref{fig:surface_class}, was $R^2=0.98$. Note that $0\leq R^2 \leq 1$, so  $R^2=0.98$ is quite high. In addition, the residuals (the differences between the actual points and the fitted surface) were clustered around zero, with mean residual being $2.4\times{10^{-10}}$. This, together with the high value of $R^2$, indicates that the model in~(\ref{eq:dist_total_form}) is an excellent approximation to the measured distortion-rate points. Indeed, in all test cases in our experiments we were obtaining $R^2>0.94$, with residuals centered around zero. 

\begin{figure}[t]	
	\centering
	\centerline{\includegraphics[width=\columnwidth]{./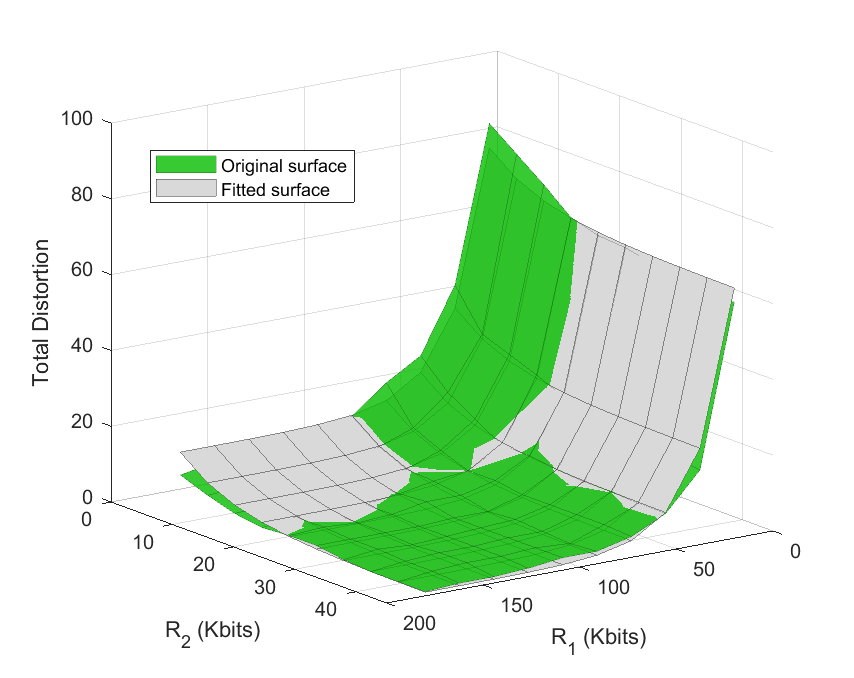}}
	\caption{distortion-rate surface obtained by encoding two deep feature tensors (green) and the fitted surface (gray). $R_1$ and $R_2$ are the average bit rates (Kbits/tensor) of the two tensors in a split DenseNet~\cite{densenet} used for image classification. 
	} 
	\label{fig:surface_class}
\end{figure} 

Another reason for selecting the model in~(\ref{eq:dist_total_form}) is the fact that theoretical distortion-rate functions~\cite{elements_IT} for commonly-used source models, such as Gaussian source with squared-error distortion and Laplacian source with absolute-error distortion, have this form, where distortion decays exponentially with increasing rate. And finally, the fact that distortion-rate surfaces in~(\ref{eq:dist_total_form}) are convex and monotonically decreasing allows us to obtain closed-form solutions for single-task and scalarized multi-task systems, as well as analytically characterize the Pareto-optimal set of solutions in certain multi-task cases.  

\section{Multi-stream single-task systems}
\label{sec:sop}

In this section we consider multi-stream single-task CI systems (Fig.~\ref{fig:CNNs}(c)), where multiple tensors are transferred from the edge to the cloud to support a single inference task. 
Let $\mathbf{R}=(R_1,R_2,...,R_N)$ be the vector of bit rates assigned to the encoded tensors. The distortion of the single task, with tensors encoded at rates $\mathbf{R}$, is $D(\mathbf{R})$, and is modeled using~(\ref{eq:dist_total_form}). 
In the single-task case, index $i$ can be dropped from~(\ref{eq:dist_total_form}), so the model  becomes $D(\mathbf{R}) \approx \gamma + \sum_{j=1}^{N} \alpha_{j} 2^{-\beta_{j} R_{j}}$, 
where $\gamma$, $\alpha_{j}$ and $\beta_{j}$ are the estimated surface parameters.
The goal is to pick $\mathbf{R}$ that minimizes the distortion subject to the total rate constraint and non-negative rates:
\begin{equation}
\label{eq:rdo}
\begin{gathered}
\text{minimize} \quad D(\mathbf{R}) \\
\begin{aligned}
\text{ s. t.} \quad \mathbf{R} \in S =  \bigg\{\mathbf{R}\in \mathbb{R}^N : &\sum_{j=1}^{N} R_j \leq R_t \\
&R_j \geq 0, \enspace j=1,...,N\bigg\}. \end{aligned}
\end{gathered}
\end{equation}
Since $D(\mathbf{R})$ and the feasible set $S$ are convex, problem~(\ref{eq:rdo}) has a closed-form solution, which 
can be obtained using the standard method of Lagrange multipliers~\cite{Boyd_Vandenberghe_2004}. Specifically, the constrained problem in~(\ref{eq:rdo}) is converted to an unconstrained problem of minimizing the Lagrangian $J(\mathbf{R})$, given by
\begin{equation}
\label{eq:lagrangian_water}
    J(\mathbf{R}) = D_t(\mathbf{R}) + \lambda \cdot \left(\sum_{j=1}^{N} R_j - R_t\right) - \sum_{j=1}^{N} \mu_j R_j  
\end{equation}
where $\lambda$ and $\mu_j$ are the Lagrange multipliers. According the the Karush-Kuhn-Tucker (KKT) conditions~\cite{Nonlinear_prog}, a point $\mathbf{R}^*=(R_1^*, ..., R_N^*)$ is the solution of~(\ref{eq:rdo}) if the following conditions are satisfied for each $j=1,2,...,N$:  
\begin{equation}
\label{eq:stationarity}
    \frac{\partial J(\mathbf{R}^*)}{\partial R_j}=0,
\end{equation}
\begin{equation}
\label{eq:dual_feasibility}
    \lambda \geq 0, \quad \mu_j \geq 0,
\end{equation}
\begin{equation}
\label{eq:comp_slackness}
    \mu_jR_j = 0.
\end{equation}
From~(\ref{eq:stationarity}) we obtain for each $j=1,2,...,N$, 
\begin{equation}
    (\ln{2}) (-\alpha_j\beta_j) 2^{-\beta_j R_j^*}+\lambda-\mu_j=0.
    \label{eq:zero_derivative}
\end{equation}
Since $R_j^*\geq0$, we have either $R_j^*=0$ or $R_j^*>0$. For those $j$ for which $R_j^*=0$, from equation~(\ref{eq:zero_derivative}) we get:
\begin{equation}
     \mu_j = \lambda - (\ln{2})(\alpha_j\beta_j),
\end{equation}
and because $\mu_j\geq0$, we conclude that 
\begin{equation}
\lambda \geq  (\ln{2})(\alpha_j\beta_j).
\label{eq:zero_rate}
\end{equation}
Hence, condition~(\ref{eq:zero_rate}) is associated with $R^*_j=0$. On the other hand, for those $j$ for which $R_j^*>0$, from~(\ref{eq:comp_slackness}) we must have $\mu_j=0$. Then, from~(\ref{eq:zero_derivative}):
\begin{equation}
  \lambda =  (\ln{2})(\alpha_j\beta_j) 2^{-\beta_j R_j^*}
\end{equation}
and solving for $R_j^*$ we obtain:
\begin{equation}
    R_j^* = \frac{1}{\beta_j} \bigg[\log_2\big\{(\ln{2})(\alpha_j\beta_j)\big\}-\log_2 \lambda \bigg].
\end{equation}
To find $\lambda$, we first note that the optimal solution must satisfy $\sum_{j=1}^{N} R_j^* = R_t$. If this were not the case and we had $\sum_{j=1}^{N} R_j^* < R_t$, we could increase any of the $R_j$ and the distortion in~(\ref{eq:dist_total_form}) would be reduced. Hence, using the fact that in the optimum solution, the rate constraint is satisfied with equality, we get
\begin{equation}
    \sum_{j \, : \, R^*_j>0} \frac{1}{\beta_j} \bigg[\log_2\big\{(\ln{2})(\alpha_j\beta_j)\big\}-\log_2 \lambda \bigg] = R_t,
\end{equation}
from which we get
\begin{equation}
\log_2 \lambda = \frac{\left[\sum_{j \, : \, R^*_j>0}{\frac{1}{\beta_j}\log_2\big\{(\ln{2})(\alpha_j\beta_j)\big\}}\right] - R_t}{\sum_{j \, : \, R^*_j>0}{\frac{1}{\beta_j}}}.
\end{equation}

Finally, we note that condition~(\ref{eq:zero_rate}), which implies $R^*_j=0$, is equivalent to 
$\big[\log_2\big\{(\ln{2})(\alpha_j\beta_j)\big\}-\log_2 \lambda \big] \leq 0$. We can therefore express the optimal rates compactly as
\begin{equation}
    R_j^* = \frac{1}{{\beta_j}} \bigg[\log_2\big\{(\ln{2})(\alpha_j\beta_j)\big\}-\log_2 \lambda \bigg]^+,
    \label{eq:solution_SOP}
\end{equation}
for $j=1,2,...,N$, where $[x]^+ = \max(x,0)$. This type of solution is sometimes called ``reverse water-filling''~\cite{elements_IT}, with $\log_2 \lambda$ being the ``water level.'' Those $j$ for which $(\ln{2})(\alpha_j\beta_j)>\lambda$ get a positive rate ($R_j^*>0$), and others get zero rate ($R_j^*=0$). Solution~(\ref{eq:solution_SOP}) will also be useful for scalarized multi-task systems, which we will discuss in the next section.

\section{Multi-stream multi-task systems}
\label{sec:mop}
In a multi-stream multi-task CI system (Fig.~\ref{fig:CNNs}(d)), multiple tensors need to be transferred from the edge to the cloud, where they will be used to support $k>1$ tasks. As before, let  $\mathbf{R}=(R_1,R_2,...,R_N)$ be the vector of bit rates assigned to the encoded tensors, and let $D_i(\mathbf{R})$ be the distortion of the $i$-th task, $i=1,2,...,k$.
Bit allocation for a multi-task system is a multi-objective optimization problem (MOP):
\begin{equation}
\label{eq:rdo_mop}
\begin{gathered}
\text{minimize} \quad \big\{D_1(\mathbf{R}), D_2(\mathbf{R}), ..., D_k(\mathbf{R})\big\} \\
\begin{aligned}
\text{ s. t.} \quad \mathbf{R} \in S =  \bigg\{\mathbf{R}\in \mathbb{R}^N : &\sum_{j=1}^{N} R_j \leq R_t \\
&R_j \geq 0, \enspace j=1,...,N\bigg\}. \end{aligned}
\end{gathered}
\end{equation}
Here, the feasible set $S$ is defined in the same as in the single-task problem~(\ref{eq:rdo}). However, there are now multiple cost functions (distortions) that need to be minimized. In most practical MOPs, improving one cost often causes other costs to degrade, and this will also be seen in our experiments in Section~\ref{sec:experiments}. In order to capture the notion of optimality in such cases where costs are conflicting, Pareto optimality is defined~\cite{nonlinear_opt_book}.
\begin{definition}
\label{def:pareto}
A vector $\mathbf{R}^* \in S$ is \emph{Pareto optimal} if there does not exist another vector $\mathbf{R} \in S$ such that $D_i(\mathbf{R}) \leq D_i(\mathbf{R}^*)$ for all $i=1,\dots,k$ and $D_j(\mathbf{R}) < D_j(\mathbf{R}^*)$ for at least one index $j$.  
\end{definition}
In other words, for the MOP in (\ref{eq:rdo_mop}), a solution is called Pareto optimal if none of the distortions can be improved without deteriorating at least one other distortion. The set of all Pareto-optimal solutions $\mathbf{R}^*$ is called the \emph{Pareto set}, and the corresponding set of cost function values $D_i(\mathbf{R}^*)$ on the Pareto set is called the \emph{Pareto front}. There is also a weaker notion of optimality for a MOP, called weak Pareto optimality~\cite{nonlinear_opt_book}. At the weakly Pareto-optimal points, it is possible to improve some cost functions (distortions) without penalizing others, i.e., with other costs remaining the same.
\begin{definition}
\label{def:weak_pareto}
A vector $\mathbf{R}^* \in S$ is \emph{weakly Pareto optimal} if there does not exist another vector $\mathbf{R} \in S$ such that $D_i(\mathbf{R}) < D_i(\mathbf{R}^*)$ for all $i=1,\dots,k$.
\end{definition}
Similarly to the single-objective optimization in~(\ref{eq:rdo}), our MOP~(\ref{eq:rdo_mop}) involves monotonically decreasing distortions. Therefore, any Pareto-optimal (or weakly Pareto-optimal) solution $\mathbf{R}^*$ must satisfy the rate constraint with equality, i.e., $\sum_{j=1}^{N} R^*_j = R_t$. Otherwise, if we had $\sum_{j=1}^{N} R^*_j < R_t$, we could simply split the leftover bits among all rates and increase each $R_j$ by the amount $\left(R_t - \sum_{j=1}^{N} R^*_j\right)/N$, thereby achieving simultaneous decrease of all  $D_i(\mathbf{R}), i=1,\dots,k$. 


Pareto set is infinite, in general. Numerical methods, such as the genetic algorithm (GA)-based methods~\cite{GA}, can be used to sample the Pareto set of~\eqref{eq:rdo_mop} to obtain a finite number of Pareto-optimal solutions. However, due to the convexity of the feasible set $S$ and distortions in~(\ref{eq:rdo_mop}), our problem is a \emph{convex MOP}. This allows us give explicit solutions to~(\ref{eq:rdo_mop}) in certain cases, and give analytic characterization of the Pareto set for some CI systems. We first recall an important theorem from~\cite{nonlinear_opt_book} concerning convex MOPs. 


\begin{theorem}
\label{th:weight_positive_pareto_inverse}
For a convex MOP, let $\mathbf{w}=(w_1,\dots,w_k)$ such that $w_i> 0$ and $\sum^k_{i=1}{w_i} = 1$. If  $\mathbf{R}^*$ is a solution of the following problem:
\begin{equation}
   \text{minimize } \; D_t(\mathbf{R}) = \sum_{i=1}^{k} w_iD_i(\mathbf{R}), \quad
\text{s. t. } \; \mathbf{R} \in S,
\label{eq:scalarization}
\end{equation}
then $\mathbf{R}^*$ is also Pareto-optimal for the corresponding MOP.
\end{theorem}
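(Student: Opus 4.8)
The plan is a one-line argument by contradiction, exploiting the fact that a sum with strictly positive weights is strictly monotone in each of its arguments; convexity of the MOP is in fact not needed for this direction. Suppose, for contradiction, that $\mathbf{R}^*\in S$ solves the scalarized problem~\eqref{eq:scalarization} but is \emph{not} Pareto optimal for the MOP~\eqref{eq:rdo_mop}. Then, by Definition~\ref{def:pareto}, there exists some $\mathbf{R}\in S$ with $D_i(\mathbf{R}) \le D_i(\mathbf{R}^*)$ for every $i=1,\dots,k$ and $D_j(\mathbf{R}) < D_j(\mathbf{R}^*)$ for at least one index $j$.

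Next I would multiply the $i$-th inequality by $w_i$. Since $w_i > 0$ for all $i$, we retain $w_i D_i(\mathbf{R}) \le w_i D_i(\mathbf{R}^*)$ for every $i$, and crucially the strict inequality is preserved for index $j$, namely $w_j D_j(\mathbf{R}) < w_j D_j(\mathbf{R}^*)$. Summing these $k$ relations over $i$ yields $\sum_{i=1}^{k} w_i D_i(\mathbf{R}) < \sum_{i=1}^{k} w_i D_i(\mathbf{R}^*)$, that is, $D_t(\mathbf{R}) < D_t(\mathbf{R}^*)$ with $\mathbf{R}\in S$. This contradicts the hypothesis that $\mathbf{R}^*$ minimizes $D_t$ over $S$, so $\mathbf{R}^*$ must be Pareto optimal.

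There is no real technical obstacle here; the only point that requires care is the step where the lone strict inequality must survive the weighting, and this is precisely where the assumption $w_i>0$ (as opposed to merely $w_i\ge 0$) is used. I would close with a short remark that if zero weights were permitted one could only conclude weak Pareto optimality in the sense of Definition~\ref{def:weak_pareto}, and that the usefulness of scalarization for characterizing the full Pareto set in Section~\ref{sec:mop} comes from the \emph{converse} implication, which does rely on convexity of the distortions and of $S$.
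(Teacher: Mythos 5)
Your proof is correct. Note, however, that the paper does not actually prove this statement: Theorem~\ref{th:weight_positive_pareto_inverse} is quoted from~\cite{nonlinear_opt_book} without proof, so there is no in-paper argument to compare against. What you have written is the standard textbook argument for the ``weighting method implies Pareto optimality'' direction, and it is sound: the domination inequalities survive multiplication by strictly positive weights, the single strict inequality keeps the summed inequality strict, and this contradicts optimality of $\mathbf{R}^*$ for~\eqref{eq:scalarization}. Your two side remarks are also accurate and worth keeping: convexity plays no role in this direction (it is only needed for the converse, i.e., that every Pareto-optimal point of a convex MOP arises from some nonnegative weighting, which is what Theorem~\ref{th:kkt} and the Pareto-set characterizations in Sections~\ref{sec:Pareto_2xk} and~\ref{sec:Pareto_3x2} ultimately rely on), and with merely nonnegative weights one can only conclude weak Pareto optimality in the sense of Definition~\ref{def:weak_pareto}.
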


Taking a weighted linear combination of cost functions in~(\ref{eq:scalarization}) is sometimes referred to as \emph{scalarization} because it creates a single (scalar) cost function out of many cost functions in the original MOP, and converts the MOP to a single-objective optimization problem. Each weight can be interpreted as the relative importance of the corresponding cost. The above theorem states that positive weights $w_i$ in~(\ref{eq:scalarization}) are a sufficient condition to obtain a Pareto-optimal solution to the MOP. However, this is not a necessary condition, and there could be Pareto-optimal solutions with zero weights. Examples of such Pareto-optimal solutions with zero weights are the endpoints of the Pareto set in our experiments in Section~\ref{exp:Multi-task}. 

Using Theorem~\ref{th:weight_positive_pareto_inverse}, there are two ways to obtain a Pareto-optimal solution for~(\ref{eq:rdo_mop}) for a given vector of task weights $\mathbf{w}=(w_1,\dots,w_k)$. In the first approach, which we call \emph{scalarize-first}, each task distortion is measured at a chosen set of rates $\mathbf{R}$. Then for each $\mathbf{R}$, a weighted combination is formed as in~(\ref{eq:scalarization}) to compute the total distortion $D_t(\mathbf{R})$. The surface model $D_t(\mathbf{R}) \approx \gamma + \sum_{j=1}^{N} \alpha_{j} 2^{-\beta_{j} R_{j}}$ is then fitted to the computed total distortions, and we end up with a single-objective optimization problem with the same structure as~(\ref{eq:rdo}), whose solution is given by~(\ref{eq:solution_SOP}). 

The second approach, which we call \emph{fit-first}, is to measure task distortions at a chosen set of rates $\mathbf{R}$, and then fit surfaces~(\ref{eq:dist_total_form}) to each task distortion $D_i(\mathbf{R})$, $i=1,\dots,k$. After that, the total distortion $D_t(\mathbf{R})$ is formed as a weighted linear combination of the fitted surfaces, as in~(\ref{eq:scalarization}). In this case, the total distortion $D_t(\mathbf{R})$ no longer has the form $\gamma + \sum_{j=1}^{N} \alpha_{j} 2^{-\beta_{j} R_{j}}$, and there is no closed-form solution. Nonetheless, $D_t(\mathbf{R})$ is still a convex surface, being a non-negative linear combination of convex surfaces $D_i(\mathbf{R})$, so the resulting problem can be solved numerically.

Ideally, for a given weight vector $\mathbf{w}$, both scalarize-first and fit-first approach should give the same solution. In practice, however, the solutions may differ somewhat due to the nature of numerical optimization. In the scalarize-first approach, the parameters of the total distortion surface $D_t(\mathbf{R})$ are numerically fitted to the weighted sum of measured task distortions. In the fit-first approach, parameters of each task distortion surface $D_i(\mathbf{R})$ are numerically fitted to the measured task distortions, and then their weighted sum is used as a cost function in the numerical solution to~(\ref{eq:scalarization}). Each of these numerical procedures may introduce some error in the computed quantities, depending on the stopping criteria and other parameters. As an example, Table~\ref{tbl:scalarize_fit} shows several rate pairs computed using the two approaches for the $2\times 3$ CI system used in the experiments in Section~\ref{exp:Multi-task}, for two weight vectors. As seen in the table, the computed rates are close, but not equal.  

\begin{table}[]
\caption{Comparison of rate pairs $(R_1,R_2)$ obtained by \emph{scalarize-first} and \emph{fit-first} approaches for the $2\times 3$ CI system used in the experiments in Section~\ref{exp:Multi-task}.}
    \centering
    \begin{tabular}{|c|c|c|}
    \hline
    $R_t=1000$ & \begin{tabular}[c]{@{}c@{}} scalarize-first\end{tabular} & \begin{tabular}[c]{@{}c@{}} fit-first\end{tabular}   \\ \hline \hline \\[-1em]
    \begin{tabular}[c]{@{}c@{}}
    $\mathbf{w}= (\frac{1}{3}, \frac{1}{3}, \frac{1}{3})$\end{tabular} & (723.28, 276.72) &  (728.73, 271.27)   \\ [-1em]
    \\
    \hline
    \begin{tabular}[c]{@{}c@{}}
    \\[-1em]
    $\mathbf{w}= (0.5, 0.48, 0.02)$\end{tabular} & (601.23, 398.77)  & (607.48, 392.52)   \\ [-1em]
    \\ \hline
    \end{tabular}%
\label{tbl:scalarize_fit}
\end{table}

Both scalarize-first and fit-first approaches give one Pareto-optimal solution of~(\ref{eq:rdo_mop}) for one weight vector $\mathbf{w}$. By changing $\mathbf{w}$, one can sample the Pareto set of~(\ref{eq:rdo_mop}) at discrete points. Other numerical Pareto solvers, such as~\cite{GA}, also offer the possibility to obtain a discrete set of samples from the Pareto set for a given MOP. However, none of these methods is able to characterize the complete Pareto set. In this paper, due to convexity of our MOP, we are able to go deeper in certain cases. Specifically, we provide analytical characterization of the Pareto set for 2-stream $k$-task ($2\times k$) CI systems, and bounds on the Pareto set for 3-stream 2-task ($3\times2$) CI systems.

\subsection{Pareto set for 2-stream $k$-task CI systems}
\label{sec:Pareto_2xk}
Consider a $2\times k$ CI system, where two feature tensors are transferred from the edge to the cloud to support $k>1$ inference tasks. 
We already know that in the optimal solution, $R_1+R_2=R_t$. Let $\mathbf{R}^i=(R_1^i,R_2^i)$ be the rates that minimize the $i$-th task distortion $D_i(\mathbf{R})$ along the rate constraint line $R_1+R_2=R_t$. Since $D_i(\mathbf{R})$ is convex, its intersection with the plane defined by $R_1+R_2=R_t$ is also convex, so $\mathbf{R}^i$ is a unique minimum of $D_i(\mathbf{R})$ along the rate constraint line.  Using~(\ref{eq:dist_total_form}) and the fact that the rates add up to $R_t$, we obtain:
\begin{equation}
\begin{gathered}
\label{eq:min_each_dist}
R_1^i = \frac{\log_2(\alpha_{i,1} \beta_{i,1})-\log_2(\alpha_{i,2} \beta_{i,2})+ \beta_{i,2} R_t}{\beta_{i,1}+\beta_{i,2}},\\
R_2^i = R_t - R_1^i. 
\end{gathered}
\end{equation}
Performing this for each of the $k$ tasks, we end up with $k$ points $\mathbf{R}^1,\mathbf{R}^2,...,\mathbf{R}^k$ on the line $R_1+R_2=R_t$. Let $R_1^{\min}=\min\{R_1^1,\dots,R_1^k\}$, $R_1^{\max}=\max\{R_1^1,\dots,R_1^k\}$, $R_2^{\min}= R_t - R_1^{\max}$ and $R_2^{\max}= R_t - R_1^{\min}$. Then we have the following result.

\begin{theorem}
\label{th:2stream_pareto}
For a 2-stream $k$-task CI  system, any point on the line $R_1+R_2=R_t$ between $(R_1^{\min},R_2^{\max})$ and $(R_1^{\max},R_2^{\min})$ is Pareto-optimal. Moreover, there are no Pareto-optimal solutions outside of this line segment.
\end{theorem}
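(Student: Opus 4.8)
The plan is to reduce the problem to a one–dimensional question on the rate–constraint line and then exploit the unimodality of each task distortion restricted to that line. From the discussion preceding the theorem we already know that every (even weakly) Pareto-optimal solution satisfies $R_1+R_2=R_t$, so the Pareto set is contained in the feasible segment $L=\{(t,R_t-t):0\le t\le R_t\}$, which I parametrize by $t=R_1$. The first step is to study $d_i(t):=D_i(t,R_t-t)=\gamma_i+\alpha_{i,1}2^{-\beta_{i,1}t}+\alpha_{i,2}2^{-\beta_{i,2}(R_t-t)}$. Its second derivative is a positive combination of exponentials, so $d_i$ is strictly convex; its derivative goes from $-\infty$ to $+\infty$, so $d_i$ has a unique global minimizer, which is exactly $R_1^i$ from~\eqref{eq:min_each_dist}, and $d_i$ is strictly decreasing on $(-\infty,R_1^i]$ and strictly increasing on $[R_1^i,\infty)$.

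To prove that every point of the stated segment is Pareto-optimal, I would argue by contradiction. Take $t^*\in[R_1^{\min},R_1^{\max}]$ and suppose some feasible $t'\neq t^*$ on $L$ has $d_i(t')\le d_i(t^*)$ for all $i$ with strict inequality for at least one index. If $t'>t^*$, choose a task $m$ with $R_1^m=R_1^{\min}\le t^*$; then $t^*,t'$ both lie in the strictly increasing branch $[R_1^m,\infty)$ of $d_m$, so $d_m(t')>d_m(t^*)$, contradicting $d_m(t')\le d_m(t^*)$. If $t'<t^*$, choose $M$ with $R_1^M=R_1^{\max}\ge t^*$ and use the strictly decreasing branch $(-\infty,R_1^M]$ of $d_M$ to get $d_M(t')>d_M(t^*)$, again a contradiction. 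Hence no such $t'$ exists and $t^*$ is Pareto-optimal (this argument covers the endpoints $R_1^{\min},R_1^{\max}$ as well).

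For the converse — that nothing outside the segment is Pareto-optimal — I would show such points fail even weak Pareto-optimality. Suppose $t^*\in[0,R_t]$ with $t^*<R_1^{\min}$ (points off the line are already excluded, and the case $t^*>R_1^{\max}$ is symmetric). Every task minimizer satisfies $R_1^i\ge R_1^{\min}>t^*$, so each $d_i$ is strictly decreasing on $[t^*,R_1^{\min}]$; therefore the feasible point $t'=R_1^{\min}$ gives $d_i(t')<d_i(t^*)$ simultaneously for all $i$, so $t^*$ is strictly dominated and cannot be (weakly) Pareto-optimal. Combining the two directions identifies the Pareto set exactly with the segment from $(R_1^{\min},R_2^{\max})$ to $(R_1^{\max},R_2^{\min})$.

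The convexity/monotonicity bookkeeping in the first step and the symmetric cases are routine. The point that requires genuine care — and the reason the argument must be global rather than a first-order check at $t^*$ — is that a candidate dominating point $t'$ need not be infinitesimally close to $t^*$; the proof must therefore use strict monotonicity of $d_i$ on an entire half-line (via strict convexity plus the location of the minimizer), not a derivative condition. A minor side issue is feasibility of the comparison point $t'=R_1^{\min}$ (resp.\ $R_1^{\max}$): this is fine under the natural non-degeneracy assumption, implicit in the statement, that $0\le R_1^{\min}$ and $R_1^{\max}\le R_t$; if some $R_1^i$ falls outside $[0,R_t]$ the segment simply collapses to the corresponding boundary rate, and the same reasoning applies on the truncated interval.
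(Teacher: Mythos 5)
Your proof is correct and follows essentially the same route as the paper's: parametrize the constraint line by a single variable, use strict convexity of each restricted distortion curve together with the location of its unique minimizer, and argue by contradiction in both directions (picking the task minimized at $R_1^{\min}$ or $R_1^{\max}$ to rule out domination inside the segment, and exhibiting a simultaneously dominating point for anything outside it). The only differences are cosmetic: you spell out the convexity bookkeeping and the boundary-clipping caveat a bit more explicitly, and your converse establishes the slightly stronger failure of weak Pareto optimality.
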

\begin{proof}
See Appendix~\ref{appendix:2stream_proof}.
\end{proof}

An illustration of Theorem \ref{th:2stream_pareto} is given in Fig.~\ref{fig:m4_2stream_proof} for a $2 \times 3$ CI system.  
There are three task distortions, and the three convex curves (blue, yellow, and red) represent the intersections of the three distortion surfaces with the plane defined by $R_1+R_2=R_t$. The green line segment is the Pareto set stated in Theorem~\ref{th:2stream_pareto}. If a point is chosen outside of the green line segment, then moving towards the line segment would reduce all distortions simultaneously. Therefore, points outside of this line segment cannot be Pareto-optimal. 
For each point inside the green line segment, moving in either direction will increase some distortions and reduce the others, however, not all distortions can be reduced simultaneously. To see this, note that in Fig.~\ref{fig:m4_2stream_proof}, within the green line segment, moving towards the red X reduces the distortion corresponding to the red curve, but increases the distortion corresponding to the blue curve. On the other hand, moving towards the blue X reduces the distortion corresponding to the blue curve but increases the distortion corresponding to the red curve. 

\begin{figure}[t!]	
	\centering
	\centerline{\includegraphics[scale=0.71]{./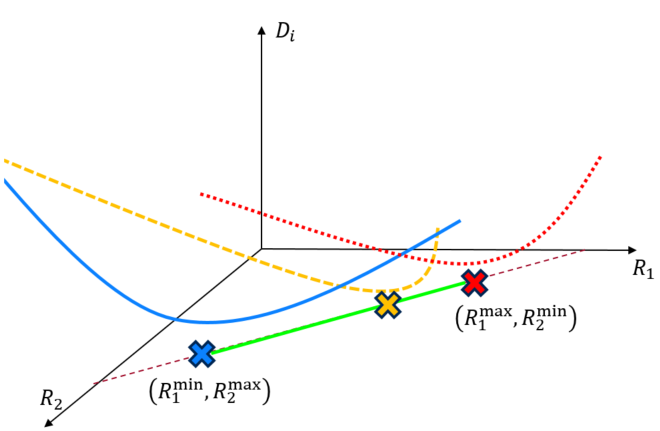}}
	\caption{An illustration of Theorem~\ref{th:2stream_pareto} for a $2 \times 3$ CI system. The dashed line represents $R_1+R_2=R_t$ and the green line segment is the Pareto set. Each curve is the intersection of $D_i$ with the plane defined by $R_1+R_2=R_t$. The three X points are the minima of each $D_i$ along $R_1+R_2=R_t$. } 
	\label{fig:m4_2stream_proof}
\end{figure} 

\subsection{Pareto set for 3-stream 2-task CI systems}
\label{sec:Pareto_3x2}
As the number of tensors to be transferred to the cloud increases beyond two, it becomes progressively more complicated to characterize the complete Pareto set. However, in the special case of $3 \times 2$ CI systems (three tensors, two tasks) we are able to provide explicit bounds on the Pareto set. We first recall an important result from~\cite{nonlinear_opt_book} concerning Karush-Kuhn-Tucker (KKT)  conditions for a convex MOP.
\begin{theorem}
\label{th:kkt}
Consider a MOP in~\eqref{eq:rdo_mop}, with convex and continuously differentiable cost and constraint functions. Let its constraint functions be denoted as $g_1(\mathbf{R})=\sum_{j=1}^{N}R_j-R_t \leq 0$ and $g_{j+1}(\mathbf{R}) = -R_j \leq 0$, $j=1,\dots,N$. A necessary and sufficient condition for $\mathbf{R}^* \in S$ to be Pareto-optimal is that there exist vectors  $\mathbf{0} < \mathbf{w} \in \mathbb{R}^k$ and $\mathbf{0} \le \boldsymbol{\mathrm{\lambda}} \in \mathbb{R}^{N+1}$ such that
\begin{equation}
\label{eq:kkt}
\begin{gathered}
\sum^k_{i=1}{w_i}\mathrm{\nabla }D_i\left(\mathbf{R}^*\right)+ \sum^{N+1}_{j=1}{{\lambda }_j}\mathrm{\nabla }g_j\left(\mathbf{R}^*\right)=0, \\
{\lambda }_jg_j\left(\mathbf{R}^*\right)=0, \quad j=1,\dots, N+1.
\end{gathered}
\end{equation}
\end{theorem}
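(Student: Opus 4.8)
The plan is to prove the two directions of the equivalence separately. The ``if'' direction is the routine part: it amounts to the classical fact that a Karush--Kuhn--Tucker point of a convex program is a global minimizer, after which Theorem~\ref{th:weight_positive_pareto_inverse} finishes the argument. The ``only if'' direction carries the content, and I would attack it with a theorem of the alternative applied to the first-order conditions at a Pareto-optimal point.

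For sufficiency, suppose $\mathbf{R}^*\in S$, $\mathbf{w}>\mathbf{0}$ and $\boldsymbol{\lambda}\geq\mathbf{0}$ satisfy~\eqref{eq:kkt}. I would form the Lagrangian $L(\mathbf{R})=\sum_{i=1}^{k}w_iD_i(\mathbf{R})+\sum_{j=1}^{N+1}\lambda_jg_j(\mathbf{R})$, which is convex and continuously differentiable as a non-negative combination of such functions and whose gradient vanishes at $\mathbf{R}^*$ by the stationarity part of~\eqref{eq:kkt}; hence $\mathbf{R}^*$ is an unconstrained global minimizer of $L$. For any feasible $\mathbf{R}\in S$ we have $\lambda_jg_j(\mathbf{R})\leq 0$ for every $j$, while complementary slackness gives $\sum_j\lambda_jg_j(\mathbf{R}^*)=0$, so $\sum_iw_iD_i(\mathbf{R})\geq L(\mathbf{R})\geq L(\mathbf{R}^*)=\sum_iw_iD_i(\mathbf{R}^*)$. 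Thus $\mathbf{R}^*$ solves the scalarized problem~\eqref{eq:scalarization} with strictly positive weights, and Theorem~\ref{th:weight_positive_pareto_inverse} shows $\mathbf{R}^*$ is Pareto-optimal.

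For necessity, assume $\mathbf{R}^*$ is Pareto-optimal and let $\mathcal{A}\subseteq\{1,\dots,N+1\}$ index the active constraints at $\mathbf{R}^*$; since the $g_j$ are affine, no constraint qualification is needed. For each objective index $i_0$ I would argue that the linearized system ``$\mathbf{d}\in\mathbb{R}^N$ with $\nabla D_{i_0}(\mathbf{R}^*)^{\top}\mathbf{d}<0$, $\nabla D_i(\mathbf{R}^*)^{\top}\mathbf{d}\leq 0$ for $i\neq i_0$, and $\nabla g_j(\mathbf{R}^*)^{\top}\mathbf{d}\leq 0$ for $j\in\mathcal{A}$'' is infeasible, because otherwise the point $\mathbf{R}^*+t\mathbf{d}$ would, for small $t>0$, be feasible (active affine constraints stay satisfied, inactive ones by continuity) and dominate $\mathbf{R}^*$, contradicting Definition~\ref{def:pareto}. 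Motzkin's transposition theorem then yields multipliers $\mu^{(i_0)}_{i_0}>0$, $\mu^{(i_0)}_i\geq 0$ ($i\neq i_0$) and $\nu^{(i_0)}_j\geq 0$ ($j\in\mathcal{A}$) with $\sum_i\mu^{(i_0)}_i\nabla D_i(\mathbf{R}^*)+\sum_{j\in\mathcal{A}}\nu^{(i_0)}_j\nabla g_j(\mathbf{R}^*)=\mathbf{0}$. Summing over $i_0=1,\dots,k$ and setting $w_i=\sum_{i_0}\mu^{(i_0)}_i$ (strictly positive, since the $i_0=i$ term is) together with $\lambda_j=\sum_{i_0}\nu^{(i_0)}_j$ for $j\in\mathcal{A}$ and $\lambda_j=0$ otherwise yields the full stationarity equation of~\eqref{eq:kkt} and $\lambda_jg_j(\mathbf{R}^*)=0$ for every $j$.

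The step I expect to be the main obstacle is the infeasibility claim, i.e.\ turning Pareto optimality into a statement about \emph{linearized} directions while retaining a \emph{strictly} positive objective multiplier for each $i_0$; this is precisely the gap between Pareto optimality, weak Pareto optimality (Definition~\ref{def:weak_pareto}) and proper Pareto optimality, and it cannot be closed by a bare first-order argument --- it requires exploiting the strict convexity and strict monotonicity of the distortion model in~\eqref{eq:dist_total_form}, so that a direction with $\nabla D_i(\mathbf{R}^*)^{\top}\mathbf{d}=0$ does not spoil the dominance step. A more self-contained alternative I would keep in reserve, matching how the result is stated in~\cite{nonlinear_opt_book}, is to first show that for a convex MOP a Pareto-optimal $\mathbf{R}^*$ is properly Pareto-optimal, hence a minimizer of some positively weighted scalarization~\eqref{eq:scalarization} (Geoffrion's theorem), and then apply the ordinary single-objective KKT theorem to that convex, linearly constrained program.
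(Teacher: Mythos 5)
The paper does not actually prove this statement: Theorem~\ref{th:kkt} is imported from~\cite{nonlinear_opt_book} with no internal argument, so there is nothing to compare your route against except the citation. Your sufficiency direction is correct and is the standard argument one would give: the fixed-multiplier Lagrangian is convex and stationary at $\mathbf{R}^*$, feasibility plus complementary slackness show $\mathbf{R}^*$ minimizes the scalarization $\sum_i w_i D_i$ over $S$, and Theorem~\ref{th:weight_positive_pareto_inverse} upgrades this to Pareto optimality.

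The necessity direction is where the real problem lies, and the obstacle you flag is not a closable technicality: the statement as written is false for plain Pareto optimality. Your linearized system can be feasible at a Pareto-optimal point whenever $\nabla D_i(\mathbf{R}^*)^{\top}\mathbf{d}=0$ for some $i\neq i_0$, because convexity then allows $D_i$ to \emph{increase} along $\mathbf{d}$, so $\mathbf{R}^*+t\mathbf{d}$ need not dominate $\mathbf{R}^*$ and no contradiction is obtained. A concrete counterexample to necessity is the unconstrained convex MOP with $f_1(x)=x$ and $f_2(x)=\bigl(\max(0,-x)\bigr)^2$ on $\mathbb{R}$: the point $x^*=0$ is Pareto optimal, yet $w_1f_1'(0)+w_2f_2'(0)=w_1\neq 0$ for every $w_1>0$, so no strictly positive weights exist. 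Your reserve plan fails for the same reason: a Pareto-optimal point of a convex MOP need not be \emph{properly} Pareto optimal (the example above is improper), and Geoffrion's theorem only covers the proper case. The version of this result in~\cite{nonlinear_opt_book} characterizes properly Pareto-optimal points; the word ``properly'' has been dropped in the statement you were given. The paper's own setting already exhibits the failure: by Theorem~\ref{th:2stream_pareto} the endpoints $(R_1^{\min},R_2^{\max})$ and $(R_1^{\max},R_2^{\min})$ of the Pareto segment are Pareto optimal, but at such an endpoint the minimized task's gradient is orthogonal to the constraint line, and subtracting the two stationarity equations in~\eqref{eq:kkt} forces the other task's gradient to be orthogonal as well --- generically impossible --- so no $\mathbf{w}>\mathbf{0}$ satisfies~\eqref{eq:kkt} there; this is exactly the ``zero-weight'' phenomenon the paper itself describes after Theorem~\ref{th:weight_positive_pareto_inverse} and in Section~\ref{exp:Multi-task}. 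A correct proof of the necessity direction must therefore start from proper Pareto optimality (Kuhn--Tucker/Geoffrion), not from Definition~\ref{def:pareto}.
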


In the above theorem, it can be assumed that $\sum^k_{i=1}{w_i} = 1$ without loss of generality~\cite{nonlinear_opt_book}. Also, for the moment, assume that all optimal rates are positive (zero and negative rates will be handled explicitly by clipping below). Since our constraint functions are $ g_{j+1}(\mathbf{R})=-R_j$ for $j=1,\dots,N$, from the bottom row in~(\ref{eq:kkt}) we conclude $\lambda_j=0$ for $j=2,\dots,N+1$. Hence, for positive rates, solving \eqref{eq:kkt} is equivalent to solving the following system of $N+1$ nonlinear equations. 
\begin{equation}
\label{eq:mop_system}
\begin{gathered}
w_1\left[ \begin{array}{c}
\frac{\partial D_1(\mathbf{R}^*)}{\partial R_1} \\ 
\vdots  \\ 
\frac{\partial D_1(\mathbf{R}^*)}{\partial R_N} \end{array}
\right]+\dots +w_k\left[ \begin{array}{c}
\frac{\partial D_k(\mathbf{R}^*)}{\partial R_1} \\ 
\vdots  \\ 
\frac{\partial D_k(\mathbf{R}^*)}{\partial R_N} \end{array}
\right]+{\lambda}_1\left[ \begin{array}{c}
1 \\ 
\vdots  \\ 
1 \end{array}
\right]=0 \\[5pt]
\lambda_1 g_{1}(\mathbf{R}^*) = 0
\end{gathered}
\end{equation}

Due to the exponential form of  distortion surfaces in~\eqref{eq:dist_total_form}, the system in~\eqref{eq:mop_system} cannot be solved analytically. 
Nonetheless, we are still able to find explicit bounds for the Pareto set for a $3 \times 2$ CI system ($N=3, k=2$). 
That is to say, we can show where the Pareto set must be, even if we can't provide an explicit equation for it.

We start from~\eqref{eq:mop_system} with $N=3$ and move  $\lambda_1$ to the right-hand side of these equations. The first three equations become: 
\begin{equation}
\label{eq:3stream_initial}
\begin{aligned}
w_1 \alpha_{1,1} \beta_{1,1} 2^{- \beta_{1,1} R_1} + w_2 \alpha_{2,1} \beta_{2,1} 2^ {- \beta_{2,1} R_1} = \lambda_1^{\prime}, \\
w_1 \alpha_{1,2} \beta_{1,2} 2^{- \beta_{1,2} R_2} + w_2 \alpha_{2,2} \beta_{2,2} 2^ {- \beta_{2,2} R_2} = \lambda_1^{\prime}, \\
w_1 \alpha_{1,3} \beta_{1,3} 2^{- \beta_{1,3} R_3} + w_2 \alpha_{2,3} \beta_{2,3} 2^ {- \beta_{2,3} R_3} = \lambda_1^{\prime}, \\
\end{aligned}
\end{equation}
where $\lambda_1^{\prime} = - \frac{\lambda_1}{\ln{2}}$. 
Equating the first and third equation, the ratio of $\frac{w_1}{w_2}$ is obtained as:
\begin{equation}
\label{eq:w1w2_ration}
\frac{w_1}{w_2} = - \frac{\alpha_{2,1} \beta_{2,1} 2^ {- \beta_{2,1} R_1} - \alpha_{2,3} \beta_{2,3} 2^ {- \beta_{2,3} R_3}}{\alpha_{1,1} \beta_{1,1} 2^{- \beta_{1,1} R_1} - \alpha_{1,3} \beta_{1,3} 2^{- \beta_{1,3} R_3}}.
\end{equation}
Using the fact that $\frac{w_1}{w_2} >0$ (because both $w_1$ and $w_2$ are positive), extreme points of $R_3$ can be obtained as functions of $R_1$. Similarly, by equating the first and the second equation in~\eqref{eq:3stream_initial}, extreme points $R_2$ can be obtained as functions of $R_1$. Then using the fact that in the optimal solution,  $R_1+R_2+R_3 = R_t$, two extreme points of $R_1$, denoted $R_1^{\min}$ and $R_1^{\max}$, can be obtained as functions of $R_t$ and surface parameters. These are shown in equation~(\ref{eq:r1_ext}) in Appendix~\ref{appendix:extreme_3x2}. Note that the extreme values of $R_1$ are clipped to $[0,R_t]$ in~(\ref{eq:r1_ext}). Similarly, we can find extreme values for the other two rates that are also clipped to $[0,R_t]$: $R_2^{\min}$ and $R_2^{\max}$, shown in~(\ref{eq:r2_ext}); and  $R_3^{\min}$ and $R_3^{\max}$, shown in~\eqref{eq:r3_ext} in Appendix~\ref{appendix:extreme_3x2}.

The extreme rate values  in~\eqref{eq:r1_ext}--\eqref{eq:r3_ext} define the cube
\begin{equation}
\label{eq:cube}
 \mathcal{C}=\big\{(R_1,R_2,R_3)\, : \, R_j^{\min} \leq R_j \leq R_j^{\max}, \,
 j=1,2,3\big\}. 
\end{equation}
We also know the optimal solutions must lie on the plane
\begin{equation}
\label{eq:plane}
    \mathcal{P} = \big\{(R_1,R_2,R_3)\, : \, R_1+R_2+R_3=R_t \big\}.
\end{equation}
Their intersection, $\mathcal{C} \, \cap \, \mathcal{P}$, defines a region where Pareto-optimal solutions can be found. This intersection is a polygon with up to six sides, as shown in Figure~\ref{fig:cube}. Not every point inside this region is Pareto-optimal, but we are able to show that there are no Pareto-optimal solutions outside of this region. Hence, this region is a bound for the Pareto set of $3 \times 2$ CI systems.

\begin{theorem}
\label{th:3stream_pareto}
For a 3-stream 2-task CI system, there are no Pareto-optimal solutions outside of $\mathcal{C} \, \cap \, \mathcal{P}$.
\end{theorem}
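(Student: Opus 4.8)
The plan is to combine the KKT characterization of Pareto optimality in Theorem~\ref{th:kkt} with the positivity of the task weights $w_1,w_2$. Since every Pareto-optimal $\mathbf{R}^*$ was already shown (from monotonicity of the distortions) to lie on the plane $\mathcal{P}$, it suffices to prove $R_j^{\min}\le R_j^*\le R_j^{\max}$ for $j=1,2,3$, i.e. that $\mathbf{R}^*\in\mathcal{C}$. I would first treat the generic case in which all three rates at $\mathbf{R}^*$ are strictly positive. Then complementary slackness in~\eqref{eq:kkt} kills the multipliers of the constraints $-R_j\le 0$, so the stationarity conditions reduce to the three equations~\eqref{eq:3stream_initial}. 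Equating these in pairs eliminates $\lambda_1'$ and yields, for each pair of streams, the ratio $w_1/w_2$ in the form~\eqref{eq:w1w2_ration}; since $w_1,w_2>0$, this ratio is strictly positive, which forces the numerator and denominator of~\eqref{eq:w1w2_ration} to have the same sign. Taking base-2 logarithms of the two single-term inequalities that arise from each sign, I would show that in either sign branch one rate is trapped strictly between two affine functions of another: equating the first and third equations puts $R_3$ strictly between $\ell_3(R_1)$ and $u_3(R_1)$, and equating the first and second puts $R_2$ strictly between $\ell_2(R_1)$ and $u_2(R_1)$, with $\ell_2,u_2,\ell_3,u_3$ explicit affine functions of the surface parameters. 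Substituting these bounds into $R_1+R_2+R_3=R_t$ collapses everything to affine inequalities in $R_1$ alone, whose solution reproduces exactly the extreme values $R_1^{\min},R_1^{\max}$ in~\eqref{eq:r1_ext}; repeating with the other two pairs of equations gives~\eqref{eq:r2_ext} and~\eqref{eq:r3_ext}, so $\mathbf{R}^*\in\mathcal{C}$.

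It then remains to cover the boundary cases. Every feasible point satisfies $0\le R_j^*\le R_t$, so whenever a raw extreme value falls outside $[0,R_t]$ the clip in~\eqref{eq:r1_ext}--\eqref{eq:r3_ext} replaces it by $0$ or $R_t$, which still trivially contains $R_j^*$. If some $R_j^*=0$, the matching constraint is active and its multiplier need not vanish, so the stationarity equation for that stream can no longer be equated with the others; but the pairwise relation between the two streams with positive rates still holds, and combined with $R_1+R_2+R_3=R_t$ and $0\le R_j^*\le R_t$ it keeps $\mathbf{R}^*$ inside the clipped cube. Enumerating the remaining degenerate configurations (two zero rates, or a vanishing denominator in~\eqref{eq:w1w2_ration}, which corresponds to a limiting weight $w_2\to 0$) finishes the argument, so no Pareto-optimal point lies outside $\mathcal{C}\cap\mathcal{P}$.

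I expect the main obstacle to be the sign and case bookkeeping of the middle step: confirming that ``numerator and denominator of~\eqref{eq:w1w2_ration} share a sign'' really does collapse to the clean affine sandwiches independently of which branch occurs, and then verifying that eliminating $R_2$ and $R_3$ through the plane reproduces precisely the clipped closed forms~\eqref{eq:r1_ext}--\eqref{eq:r3_ext} (in particular, that the relevant affine combinations are monotone in $R_1$ so the inequalities solve in the stated direction). The zero-rate boundary cases are conceptually straightforward but must be enumerated carefully so that none of them lets a Pareto point escape $\mathcal{C}\cap\mathcal{P}$.
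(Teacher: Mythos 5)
Your argument is essentially the paper's own: Theorem~\ref{th:kkt} with strictly positive weights, pairwise elimination of $\lambda_1'$ from~\eqref{eq:3stream_initial} to obtain the ratio~\eqref{eq:w1w2_ration}, the resulting sign constraint trapping $R_2$ and $R_3$ between affine functions of $R_1$, and substitution into $R_1+R_2+R_3=R_t$ to recover the clipped extrema~\eqref{eq:r1_ext}--\eqref{eq:r3_ext}; the paper's Appendix~\ref{appendix:3stream_proof} states this as a two-line contradiction and defers exactly this construction to Section~\ref{sec:Pareto_3x2}. One bookkeeping correction: because of the leading minus sign in~\eqref{eq:w1w2_ration}, positivity of $w_1/w_2$ forces the two bracketed differences to have \emph{opposite} signs rather than the same sign, but both branches yield the same affine sandwich, so your conclusion is unaffected.
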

\begin{proof}
See Appendix~\ref{appendix:3stream_proof}.
\end{proof}

\begin{figure}[t!]	
	\centering
	\centerline{\includegraphics[width=7.5cm]{./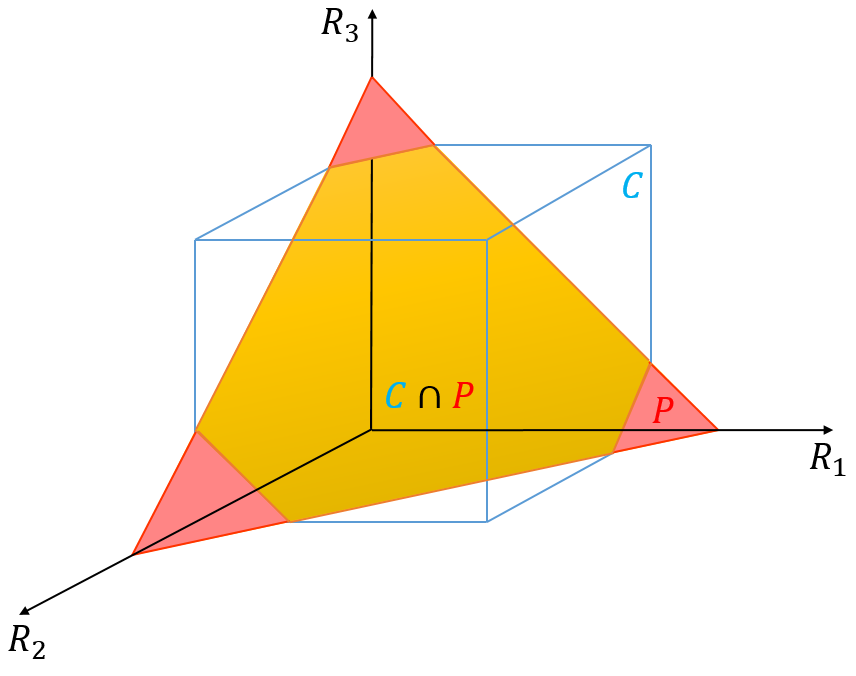}}
	\caption{An example of the intersection $\mathcal{C} \, \cap \, \mathcal{P}$. } 
	\label{fig:cube}
\end{figure}

\section{Experiments}
\label{sec:experiments}
To assess the performance of the bit allocation strategies presented earlier, experiments are conducted on various single-task and multi-task CI systems. We used DenseNet-121~\cite{densenet} for single-task experiments, and the model from~\cite{our_icassp} for multi-task experiments. Both DNN models are shown in Fig.~\ref{fig:curve}. The four test cases included in the experiments are: $2 \times 1$ (two tensors, single task), $2 \times 2$ (two tensors, two tasks), $2 \times 3$ (two tensors, three tasks) and $3 \times 2$ (three tensors, two tasks). 

\begin{figure*}[h]
\begin{tabular}{cc}
\begin{subfigure}{\linewidth}\centering\includegraphics[width=14cm]{./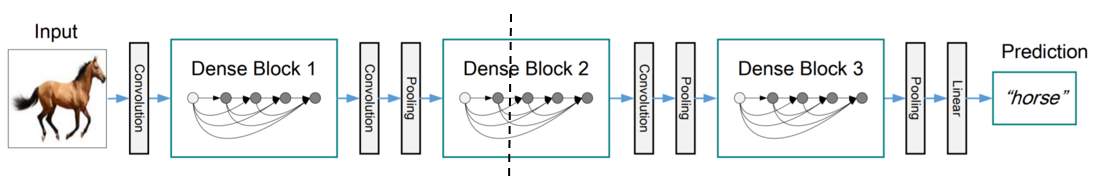}\caption{}\label{fig:dense_net}\end{subfigure} \\
\begin{subfigure}{\linewidth}\centering\includegraphics[width=10cm]{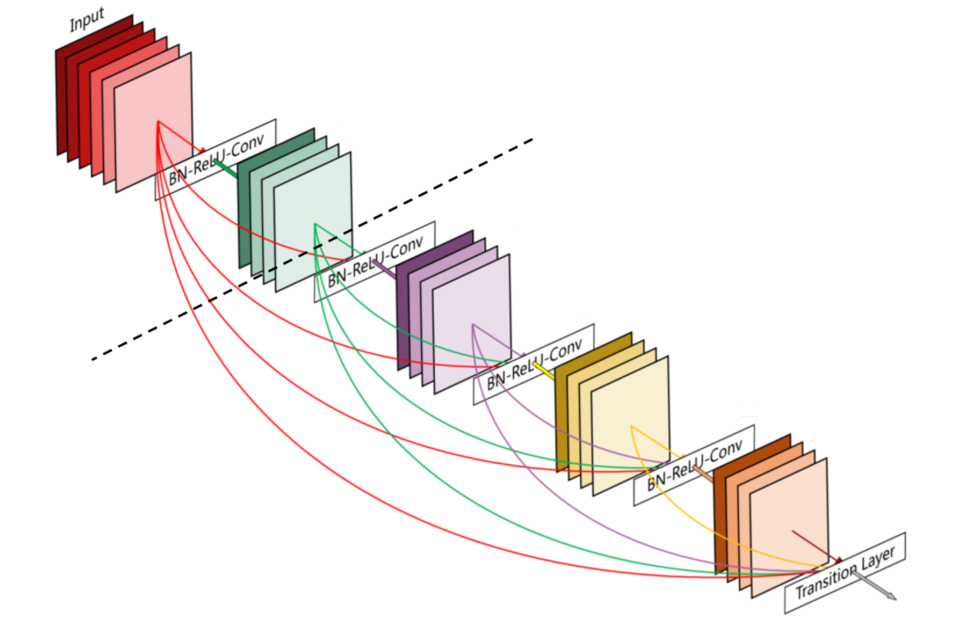}\caption{}\label{fig:dense_block}\end{subfigure} \\
\begin{subfigure}{\linewidth}\centering\includegraphics[width=16cm]{./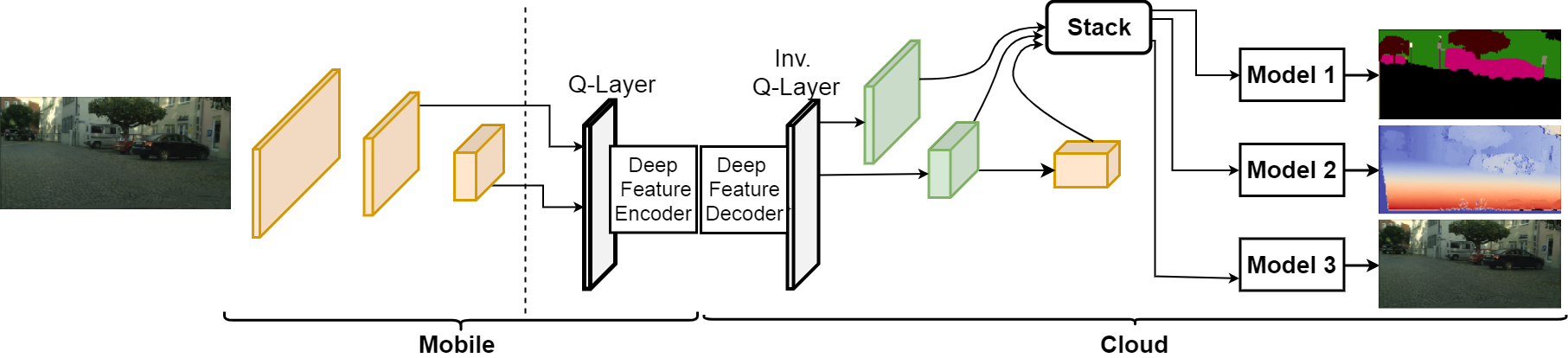}\caption{}\label{fig:multi-task_model}\end{subfigure}
\end{tabular}
\caption{DNN models used in the experiments: (a) \copyright 2017 IEEE. DenseNet~\cite{densenet}; the dashed line indicates where the model is split. (b) \copyright 2017 IEEE. Illustration of tensors in a simple dense block of DenseNet. (c) \copyright 2020 IEEE. The 2-stream, 3-task DNN from~\cite{our_icassp}. For clarity, feature tensors are shown, rather than layers. 
}
\label{fig:curve}
\end{figure*}

\subsection{Methodology}
\label{subsec:methodology}
For actual tensor compression, we follow a method similar to~\cite{choi_icip,choi_lossless}. We apply uniform 8-bit (min-max) quantization to the deep feature tensor, then rearrange its channels to form a tiled image, as shown in the sample in Fig.~\ref{fig:tiled}. Any image codec can be used to encode the tiled tensor image; in our experiments we used  JPEG2000~\cite{jpg2k} because of its rate control tools that allowed us to obtain the desired rate fairly accurately.

\begin{figure}[t!]	
	\centering
	\centerline{\includegraphics[width=\columnwidth]{./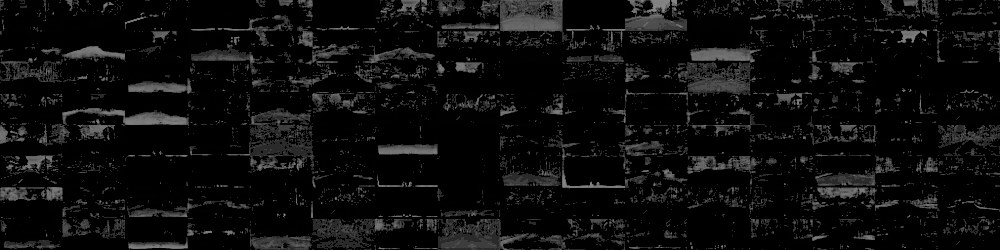}}
	\caption{An example of the tiled quantized deep feature tensor (enhanced for better visualization).} 
	\label{fig:tiled}
\end{figure} 

Each experiment covers a certain range of total rates $R_t \in [R_t^{\min}, R_t^{\max}]$. In each case, we measured task distortions at $100$ rate tuples uniformly sampled such that $0.5R_t^{\min} \leq \sum_j R_j \leq 1.5R_t^{\max}$. The reason for sampling beyond $[R_t^{\min}, R_t^{\max}]$ is to get a good distortion surface fit for cases where the actual $R_t$ is at the boundary of the range. Then, for a given $R_t$, we would select the points near $R_t$ for surface fitting. Specifically, we took all points such that $0.75R_t \leq \sum_j R_j \leq 1.25R_t$, and used these points for distortion surface fitting.

The bit allocation methods presented in Sections~\ref{sec:sop} and~\ref{sec:mop} are compared against three alternatives: equal bit allocation (Method 1), bit allocation proportional to the number of tensor elements (Method 2), and bit allocation proportional to the variance of tensor elements (Method 3). The variance for Method 3 is computed on the same training samples 
used to obtain surface parameters for our proposed bit allocation.

\subsection{Multi-stream single-task experiments}
\label{exp:single-task}
For multi-stream single-task experiments we used DenseNet-121~\cite{densenet}, a popular DNN model for image classification. As shown in Fig.~\ref{fig:curve}(a), DenseNet consists of several \emph{dense blocks}, each of which contains a number of  convolutional layers. In a dense block (Fig.~\ref{fig:curve}(b)), the initial tensor (shown in red) is passed through a \emph{dense layer}, which is a combination of batch normalization, activation function, and a convolutional layer. The input tensor is then concatenated with the output of the first dense layer, and the concatenated tensor is fed into the second dense layer, and so on. 
We used the DenseNet-121 implementation from Torchvision~\cite{densenet_model}, which was trained on ImageNet~\cite{imagenet}. No additional tuning of the model was performed. 

Due to the many skip connections in a dense block, depending on where the model is split, multiple tensors may be obtained. For our experiments, we chose a split point at the output of the second dense layer in the second dense block, thereby obtaining two tensors. 
The dimensions of the tensors are given in Table~\ref{tbl:feature_dimension_densenet}. As shown in the table, for an input image of dimensions $224\times224\times3$, the two tensors have dimensions $28\times28\times128$ and $28\times28\times32$. In total, this is 17\% less than the number of pixels in the input image.

The experiments are conducted on a subset of the ImageNet validation set 
with three total rate constraints $R_t \in \{50,100,150\}$ Kbits. We randomly selected 20 balanced classes from the validation set. In each class, 20\% of the data is randomly chosen 
to obtain distortion-rate surface (Fig.~\ref{fig:surface_class}) parameters, and the remaining 80\% of the data is used for testing. Distortion~(\ref{eq:dist_def}) is derived from the Top-1 accuracy. 
With reference to Section~\ref{subsec:methodology}, the distortion-rate surface in Fig.~\ref{fig:surface_class} is obtained by setting $R_t^{min}=50$, $R_t^{max}=150$, and sampling rates in $0.5R_t^{\min} \leq R_1+R_2 \leq 1.5R_t^{\max}$, i.e., $25 \leq R_1+R_2 \leq 225$. Note that the first tensor has 4 times as many elements as the second tensor (Table~V), so it can be expected that its rate will be (roughly) 4 times as large, i.e., $R_1=4\cdot R_2$. Plugging this back into the above inequality, we obtain that the ranges where $R_1$ and $R_2$ should be sampled are $20 \leq R_1 \leq 180$ and $5 \leq R_2 \leq 45$. For the proposed bit allocation method, we fitted the surface model~(\ref{eq:dist_total_form}) and then used~(\ref{eq:solution_SOP}) to allocate bits. 


Fig.~\ref{fig:distortion_surface_rates} shows the fitted distortion-rate surface from Fig.~\ref{fig:surface_class}, along with the intersections of this surface with three rate constraint planes $R_1+R_2=R_t$. The red dashed curve corresponds to $R_t=50$ Kbits, the blue dashed curve corresponds to $R_t=100$ Kbits and the yellow dashed curve corresponds to $R_t=150$ Kibts. Notice that the red curve is in a highly sloped part of the surface, blue curve in the medium-slope part, and yellow curve in a relatively flat part. The difference between optimal and sub-optimal bit allocations are most obvious in the highly sloped part (lowest $R_t$), since the impact on distortion is higher here than in the flatter parts of the surface. As we move towards the flatter parts of the surface, the difference between optimal and sub-optimal allocation decreases, as the impact on the resulting distortion becomes smaller and smaller. Our selection of $R_t$'s was meant to capture and illustrate these phenomena.

The rates allocated to each  tensor and the corresponding Top-1 accuracy for three different rate constraints are shown in Table~\ref{tbl:m4_rate_acc}. It is seen in the table that, as the rate constraint increases from $R_t=50$ Kbits to $R_t=150$ Kbits, the Top-1 accuracy increases for all allocation methods, as expected -- the less quantization noise, the better the accuracy. 

\begin{figure}[t]	
	\centering
	\centerline{\includegraphics[width=\columnwidth]{./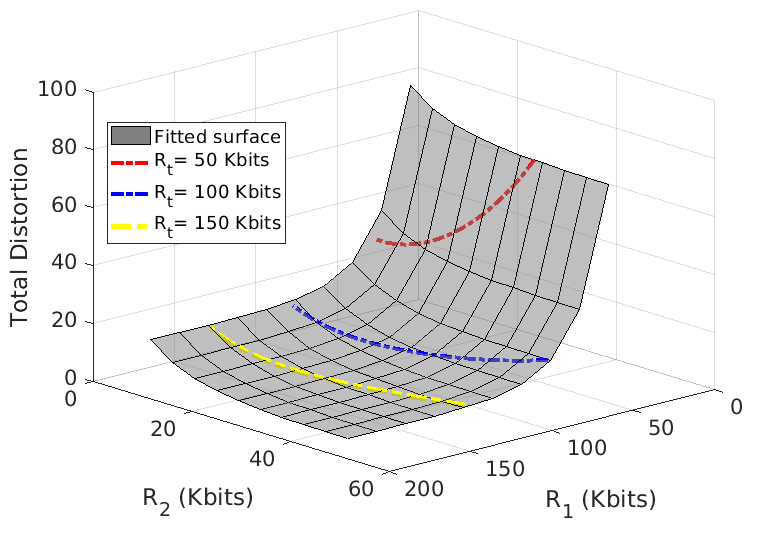}}
	\caption{
	Distortion-rate surface from Fig. 2. The dashed lines are the intersections of the distortion surface with planes $R_1+R_2= R_t$ for $R_t \in \{50,100,150\}$ Kbits.} 
	\label{fig:distortion_surface_rates}
\end{figure}

In all cases in Table~\ref{tbl:m4_rate_acc}, Method 1 (equal bit allocation) achieves the lowest Top-1 accuracy, while our proposed method achieves the highest accuracy, indicated in bold. Methods 2 and 3 are close behind, and for $R_t=100$ Kbits, Method 2 achieves the same Top-1 accuracy as our method, because it happens to allocate very similar rates to the two tensors as our method. 

\begin{table}[tbh]
\centering
\caption{The average rate (Kbits/tensor) for each tensor and the corresponding Top-1 accuracy. 
The Top-1 accuracy of the model without feature compression is 87.750\%. 
}
\begin{tabular}{|c|c|c|c|}
\hline
Method & $R_1$ & $R_2$ & Top-1 accuracy (\%) \\ \hline \hline
\multicolumn{4}{|c|}{$R_t=50$ Kbits}    \\ \hline
Ours     & 39.633 & 10.367  & \textbf{67.375
}        \\ \hline
Method 1 & 25.000 & 25.000 &   49.125     \\ \hline
Method 2 & 40.000 & 10.000 &   67.125         \\ \hline
Method 3 & 35.091 & 14.909 &   64.875      \\ \hline  \hline
\multicolumn{4}{|c|}{$R_t=100$ Kbits}   \\ \hline
Ours      & 78.272 & 21.728 & \textbf{84.875}        \\ \hline
Method 1  & 50.000 & 50.000 &   82.750     \\ \hline
Method 2  & 80.000 & 20.000 & \textbf{84.875}        \\ \hline
Method 3  & 70.182 & 29.818 & 84.375       \\ \hline \hline
\multicolumn{4}{|c|}{$R_t=150$ Kbits}   \\ \hline 
Ours      & 122.020 & 27.980 & \textbf{87.500}        \\ \hline
Method 1  & 75.000  & 75.000 & 85.750         \\ \hline
Method 2  & 120.000 & 30.000 &    87.250  \\ \hline
Method 3  & 105.273 & 44.727 &    86.750     \\ \hline
\end{tabular}
\label{tbl:m4_rate_acc}
\end{table}

\subsection{Multi-stream multi-task experiments}
\label{exp:Multi-task}
For multi-stream multi-task experiments we used the multi-task model from~\cite{our_icassp}, shown in Fig.~\ref{fig:curve}(c), which was trained on the Cityscapes dataset~\cite{cityscapes}. Cityscapes includes 2,975 training images with their corresponding semantic segmentation and disparity maps.  The multi-task model was trained to perform three tasks: semantic segmentation (Task 1), disparity estimation (Task 2) and input reconstruction (Task 3). The task-specific accuracy metrics are:  mean Intersection over Union (mIoU)~\cite{metrics} for semantic segmentation, Root Mean Squared Error (RMSE) in pixels~\cite{metrics} for disparity estimation, Peak Signal to Noise Ratio (PSNR) in dB for input reconstruction.  Since the annotations for the Cityscapes test set are not publicly available, the 500  images in the validation set are used as the test set, as in ~\cite{cambridge,NeurIPS_Sener}. From this set, 20\% of the data is randomly chosen 
to obtain distortion-rate surface parameters, and the remaining 80\%  is used for testing. 

The backbone (part of the model between the input and the stack block in Fig.~\ref{fig:curve}(c)) is similar to the backbone of YOLOv3~\cite{YOLOv3} and has 74 convolutional layers. Its weights were initialized using YOLOv3 weights. Models 1, 2, and 3 on the cloud side are based on the FC8 model~\cite{FC} and use the stacked deep features as inputs.
Cross-entropy loss~\cite{metrics} was used for semantic segmentation, and Mean Square Error (MSE) was used as the loss function for the other two tasks.
Following~\cite{cambridge}, the weighted sum of the mentioned losses is used as the total loss, and the weight of each loss was a trainable parameter.
Adam optimizer with the initial learning rate of 0.1 and rate decay by a factor of 0.85 every 20 epochs was used to train the model, end to end, for 250 epochs. 

\subsubsection{Bit allocation}
\label{subsubsubsec:2stream_bit_exp}
Two features tensors are taken from the model: one from layer 36 and the other from layer 61, as indicated by the vertical dashed curve in Fig.~\ref{fig:curve}(c). 
The dimension of the tensors are given in Table~\ref{tbl:feature_dimension_multi}. These features skip to the stack block and directly affect the output of the model. 
For a given task weight vector $\mathbf{w}=(w_1,w_2,w_3)$, we used scalarize-first approach (Section~\ref{sec:mop}) to scalarize the three task distortions into the total distortion $D_t$, and then fit the surface model~(\ref{eq:dist_total_form}) to $D_t$. Then, for a given rate constraint $R_t$, the bit allocation can be obtained from~(\ref{eq:solution_SOP}). 

\begin{table}[t]
\centering
\caption{The average rate (Kbits/tensor) for each tensor and the total distortion $D_t$, with $w_i=\frac{1}{3}$, achieved by various bit allocation methods for three rate constraints $R_t$.}
\vspace{0.2cm}
\label{tbl:benchmarks_dist_comparison}
\begin{tabular}{|c|c|c|c|}
\hline
 Method & $R_1$ & $R_2$ & Total distortion $D_t$ \\ \hline \hline  
 \multicolumn{4}{|c|}{$R_t=1000$ Kbits}    \\ \hline
Method 1 &    500.00  &  500.00 &   17.46 \\ \hline
Method 2 &    666.67  &   333.33 &  16.03       \\ \hline
Method 3 &   801.16  &   198.84  &  16.10      \\ \hline 
Ours     &   723.28   & 276.72    &\textbf{15.75}        \\ \hline \hline
\multicolumn{4}{|c|}{$R_t=1500$ Kbits}    \\ \hline
Method 1 &  750.00  & 750.00   &    14.99   \\ \hline
Method 2 &  1000.00 & 500.00   &    13.36          \\ \hline
Method 3 &  1201.75 & 298.25   &    11.88         \\ \hline
Ours     &  1215.44 & 284.56   &    \textbf{11.81}        \\ \hline \hline
\multicolumn{4}{|c|}{$R_t=2000$ Kbits}    \\ \hline
Method 1 &  1000.00    & 1000.00     &  13.21     \\ \hline
Method 2 &  1333.33 &  666.67  &    10.54         \\ \hline
Method 3 &  1602.33 & 397.67   &    8.89    \\ \hline
Ours    &   1701.67 & 298.33 &    \textbf{8.43}        \\ \hline
\end{tabular}
\end{table}

Table~\ref{tbl:benchmarks_dist_comparison} shows the average allocated bits and the corresponding total distortion $D_t$ with equal weights ($w_i=\frac{1}{3}$)  
for three rate constraints: $R_t\in\{1000, 1500, 2000\}$ Kbits. 
According to Table~\ref{tbl:feature_dimension_multi}
, the tensor sizes here ($32\times64\times256$ and $16\times32\times512$) are larger compared to the tensors from the DenseNet model in Section~\ref{exp:single-task} ($28\times28\times128$ and $28\times28\times32$), so the tested rates are larger as well. The results for Methods 1-3 are also shown in the table.  The lowest total distortion under each $R_t$ is indicated in bold. In all cases, our bit allocation achieves the lowest distortion. At $R_t=1500$ Kbits, Method 3 happens to produce very similar rates to our method, and the resulting distortions are correspondingly similar.  Nonetheless, the proposed solution gives the lowest distortion.  

Since $R_t=1000$ Kbits is the lowest among the tested rates here, the total distortion values at $R_t=1000$ Kbits are higher than at the other two rates, as seen in Table~\ref{tbl:benchmarks_dist_comparison}. In fact, near $R_t=1000$ Kbits, the total distortion surface $D_t$ is more sloped than near the other two rates (similar to what is shown in Fig.~\ref{fig:distortion_surface_rates}), and the difference between optimal and sub-optimal bit allocations will be more obvious. Therefore, 
$R_t=1000$ Kbits is chosen for the remaining experiments.


Next we examine the effects of task weights $w_i$. When the weights change, the total distortion $D_t$ and its approximating surface 
will change, so the proposed method will find different rates in~(\ref{eq:solution_SOP}). Meanwhile, the three benchmarks keep their bit allocations unchanged, because the number of elements in the tensors and their variance stay the same. The task-specific accuracies also stay the same for the benchmarks, but their total distortion changes according to the new weights.
Table~\ref{tbl:benchmarks_result_comparison} has two parts: in the top part, task-specific accuracies and the allocated rates ($R_1$ and $R_2$) are shown for 
uncompressed features, benchmarks and five sets of weights, while the bottom part shows total distortion $D_t$ for two sets of weights, for a total rate constraint of $R_t=1000$ Kbits. The best task-specific accuracies (
excluding the accuracies without compression) in the top part, and the lowest total distortions in the bottom part, are indicated in bold.

The first set of weights are the equal weights $(w_1,w_2,w_3)=(\frac{1}{3},\frac{1}{3},\frac{1}{3})$. The second set of weights is chosen inversely proportional to the average task distortion over the 100 measured rate tuples. Specifically, 
if $\overline{D_i}$ is the average distortion of the $i$-th task over the sampled 100 rate tuples, the corresponding task weight  
is obtained as:
\vspace{-5pt}
\begin{equation}
    w_i = \frac{\frac{1}{\overline{D_i}}}{\sum_{j=1}^{k} \frac{1}{\overline{D_j}}}
\end{equation}
This way, the weighted distortions in~(\ref{eq:scalarization}) are equalized. For our experiment, these weights were computed as $(w_1,w_2,w_3)=(0.5,0.48,0.02)$. The last three columns in the top part of the table show the test accuracies obtained when bits are allocated to optimize only one task at a time. These are included to support the discussion of the Pareto set. 



\begin{table*}[tb!]
\centering
\caption{Rates allocated to each tensor, task-specific accuracies and total distortion  for $R_t$=1000 Kbits. Higher numbers ($\uparrow$) are better for mIoU and PSNR; lower numbers ($\downarrow$) are better for RMSE.}
\vspace{0.2cm}
\label{tbl:benchmarks_result_comparison}
\begin{tabular}{|c|c|c|c|c|c|c|c|c|c|}
\hline
$R_t=1000$  & \begin{tabular}[c]{@{}c@{}} No \\compression \\ \end{tabular} & \begin{tabular}[c]{@{}c@{}}Method\\ 1\end{tabular} & \begin{tabular}[c]{@{}c@{}}Method \\ 2\end{tabular} & \begin{tabular}[c]{@{}c@{}}Method \\ 3\end{tabular} & \begin{tabular}[c]{@{}c@{}}Ours\\ ($\frac{1}{3}, \frac{1}{3}, \frac{1}{3}$)\end{tabular} & \begin{tabular}[c]{@{}c@{}}Ours\\ (0.5, 0.48, 0.02)\end{tabular} & \begin{tabular}[c]{@{}c@{}}Ours\\ (1, 0, 0)\end{tabular} & \begin{tabular}[c]{@{}c@{}}Ours\\ (0, 1, 0)\end{tabular} & \begin{tabular}[c]{@{}c@{}}Ours\\ (0, 0, 1)\end{tabular}  \\ \hline \hline 
Task 1 - mIoU (\%) $\enspace \uparrow$ & 63.02 & 61.91 & 62.16 & 61.75 & 62.15& 62.20 & \textbf{62.22} & 62.09 & 44.23  \\ \hline
Task 2 - RMSE (px)  $\downarrow$& 7.80 & \textbf{7.85} & 7.87 & 8.07 & 7.91& \textbf{7.85} & 7.86 & \textbf{7.85} & 16.43   \\ \hline
Task 3 - PSNR (dB) $\uparrow$ & 39.97 & 19.98 & 21.61 & 22.79 & 22.16 & 20.87& 21.06& 20.34 & \textbf{24.20}   \\ \hline \hline 
$R_1$ & - & 500.00 & 666.67 & 801.16 & 723.28 & 601.23 & 616.72 & 548.70  & 986.11     \\ \hline
$R_2$ & - & 500.00 & 333.33 & 198.84 & 276.72 & 398.77 & 383.28 & 451.30 &  13.89  \\ \hline
\multicolumn{9}{c}{\vspace{0.1cm}}\\
\end{tabular} 

\begin{tabular}{|c|c|c|c|c|}
\hline
$R_t=1000$  & \begin{tabular}[c]{@{}c@{}}Method 1\end{tabular} & \begin{tabular}[c]{@{}c@{}}Method 2\end{tabular} & \begin{tabular}[c]{@{}c@{}}Method 3\end{tabular} &  \begin{tabular}[c]{@{}c@{}}Ours\end{tabular}   \\ \hline \hline \\[-1em]
\begin{tabular}[c]{@{}c@{}}
$D_t$ with $(\frac{1}{3}, \frac{1}{3}, \frac{1}{3})$\end{tabular} & 17.46 & 16.03 & 16.10 & \textbf{15.75}   \\ [-1em]
\\
\hline
\begin{tabular}[c]{@{}c@{}}
\\[-1em]
$D_t$ with $(0.5, 0.48, 0.02)$\end{tabular} & 2.20 & 2.01 & 3.48 & \textbf{1.91} \\ [-1em]
\\ \hline
\end{tabular}%
\end{table*}

Based on the rates in the last three columns of the top part of Table~\ref{tbl:benchmarks_result_comparison} and Theorem~\ref{th:2stream_pareto}, we conclude that the Pareto set for this problem is located on the line $R_1+R_2=R_t=1000$ Kbits, between the points $(R_1^{\min},R_2^{\max})=(548.70, 451.30)$ and $(R_1^{\max},R_2^{\min})=(986.11, 13.89)$. Hence, the rate allocations produced by our method for all five sets of weights, as well as the rates produced by Methods 2 and 3, are in the Pareto set. The comparison between any two among these seven rate allocations shows that some task accuracies improve, while others degrade. That is the nature of the Pareto set -- one cannot achieve improvement in some tasks without degrading others. However, we note that Method 1 produces rates $(R_1,R_2)=(500,500)$, which are outside of the Pareto set. Hence, it should be possible to improve some task accuracies compared to this allocation, without hurting others. Indeed, we find that our method with $(w_1,w_2,w_3)=(0.5,0.48,0.02)$ and $(w_1,w_2,w_3)=(0,1,0)$ finds allocations that improve the accuracies of Tasks 1 and 3 without hurting the accuracy of Task 2, compared to Method 1. 

Among the points on the Pareto set (Methods 2 and 3, and our allocations) it is impossible to say one solutions is uniformly better than the others, because of the nature of Pareto-optimal solutions -- some accuracies improve, others degrade. One sensible way to compare the solutions is in terms of the total (scalarized) distortion $D_t$ from~(\ref{eq:scalarization}), which is shown in the bottom part of Table~\ref{tbl:benchmarks_result_comparison}. Based on this these results, our allocations produce the lowest $D_t$, as expected. 

Comparing task-specific accuracies with and without compression  shows that that semantic segmentation and disparity map estimation are not very sensitive to feature compression, as the drop in their accuracies due to feature compression is relatively small. However, input reconstruction seems to be significantly impacted by feature compression in our multi-task model, since the drop in the accuracy of input reconstruction due to feature compression is comparatively large.

\subsubsection{Pareto set for a $2 \times 2$ system}
\label{subsubsubsec:2stream_pareto_exp}
In order to gain further insight into the nature of the Pareto set, we focus now on a $2\times2$ CI system, which is the same as the system in the previous subsection, but with Task 3 (input reconstruction) removed. So, we are left with two tensors and two tasks. This allows us to plot the Pareto set in the $R_1$-$R_2$ rate plan, and also plot the Pareto front in the $D_1$-$D_2$ distortion plane. The rate constraint is again $R_t=1000$ Kbits. Based on Table~\ref{tbl:benchmarks_result_comparison} and Theorem~\ref{th:2stream_pareto}, the Pareto set is now located on the line $R_1+R_2=R_t=1000$, between the points $(R_1^{\min},R_2^{\max})=(548.70, 451.30)$ and $(R_1^{\max},R_2^{\min})=(616.72, 383.28)$.

The Pareto set is shown in Fig.~\ref{fig:m4_2rates_results}(a). It is located on the rate constraint line between the two points specified above, which are displayed as blue and cyan diamonds. These two points are obtained by minimizing individual task distortions. Also shown are 1000 green points on the Pareto set. These are obtained by numerically  solving~(\ref{eq:scalarization}), using Matlab's \texttt{fmincon} function for 1000 weight pairs. These discrete samples are what we can typically obtain from various numerical Pareto solvers. But Theorem~\ref{th:2stream_pareto} allows us to characterize the entire Pareto set analytically.

The Pareto front is the set of distortions obtained as the rates sweep the Pareto set. It is shown in the $D_1$-$D_2$ distortion plane 
in Fig.~\ref{fig:m4_2rates_results}(b). In this figure, the gray points are the distortion points obtained with $R_1+R_2<R_t$. The red and green points are the distortion points obtained with $R_1+R_t=R_t$. This intuitively shows why the Pareto-optimal solutions for our problem satisfy $R_1+R_2=R_t$, and is a consequence of the monotonicity of the distortion surfaces. The green points represent the Pareto front, and are obtained by computing distortions for the 1000 numerically sampled Pareto-optimal solutions in Fig.~\ref{fig:m4_2rates_results}(a). The diamonds in Fig.~\ref{fig:m4_2rates_results}(b)  correspond to the diamonds in  Fig.~\ref{fig:m4_2rates_results}(a), and represent distortion minima of the two tasks. 
For every point on the Pareto front, 
reducing one task distortion increases  the other distortion, and this can clearly be seen in Fig.~\ref{fig:m4_2rates_results}(b). 


\begin{figure}[t] 
\begin{tabular}{ccc}
\begin{subfigure}{\columnwidth}\centering\includegraphics[width=7.5cm]{./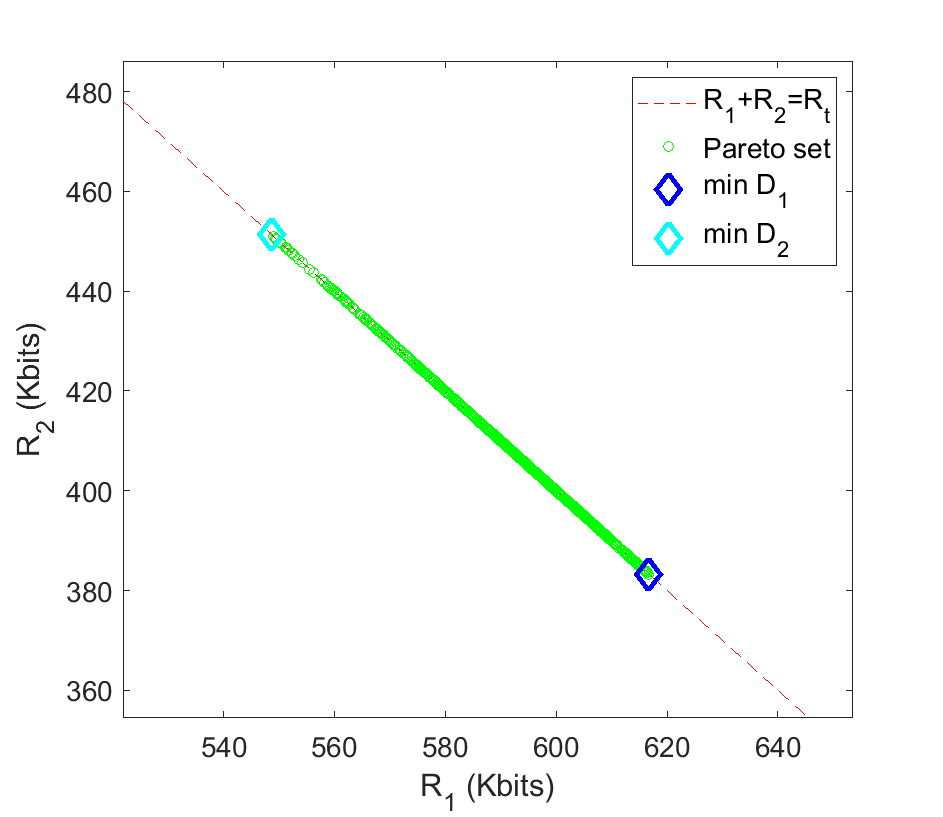}\caption{}\label{fig:2rates_pareto_set}\end{subfigure} \\
\begin{subfigure}{\columnwidth}\centering\includegraphics[width=7.5cm]{./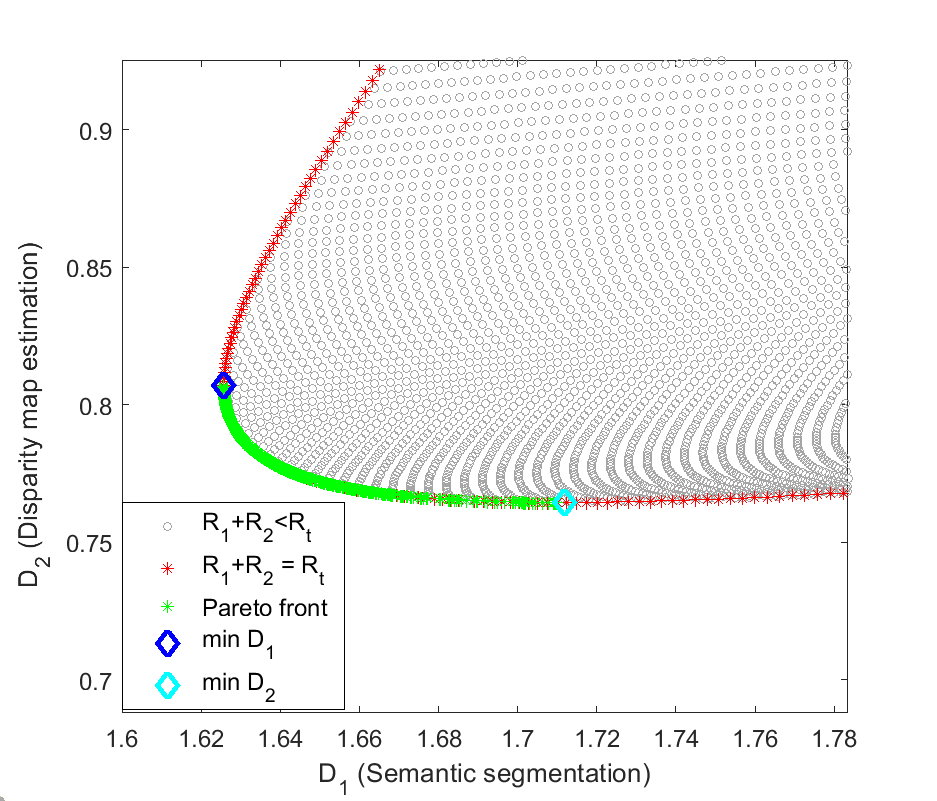}\caption{}\label{fig:2rates_pareto_front}\end{subfigure} 
\end{tabular}
\caption{$2 \times 2$ CI system: (a) Pareto set (green) in the rate plane; (b) Pareto front (green) in the distortion plane. 
}
\label{fig:m4_2rates_results}
\end{figure}

\subsubsection{Pareto set for a $3 \times 2$ system}
\label{subsubsubsec:3stream_pareto_exp}
Here we again use the model in Fig.~\ref{fig:multi-task_model}, but the split point is moved to just before the stack layer (output of layer 74) where there are three tensors to be transferred. We keep Tasks 1 and 2 from the previous section, thereby creating a $3 \times 2$ CI system. Following the procedure described before Theorem~\ref{th:3stream_pareto}, we find rate extrema for the total rate constraint of $R_t=1000$~Kbits. Theorem~\ref{th:3stream_pareto} then states that the Pareto set is bounded by a polygonal region obtained as the intersection of the rate constraint plane and rate extrema cube. This region is shown in Fig.~\ref{fig:m4_3rates_results} as orange points. The green points show 1000 samples from the Pareto set obtained numerically, as in the previous section. As shown in the figure, the Pareto set is fully contained in the orange region, as predicted by Theorem~\ref{th:3stream_pareto}. On the other hand, this is not the tightest possible bound on the Pareto set; tightening this bound is a topic for future research. 

\begin{figure}[t] 
\centering
\includegraphics[width=\columnwidth]{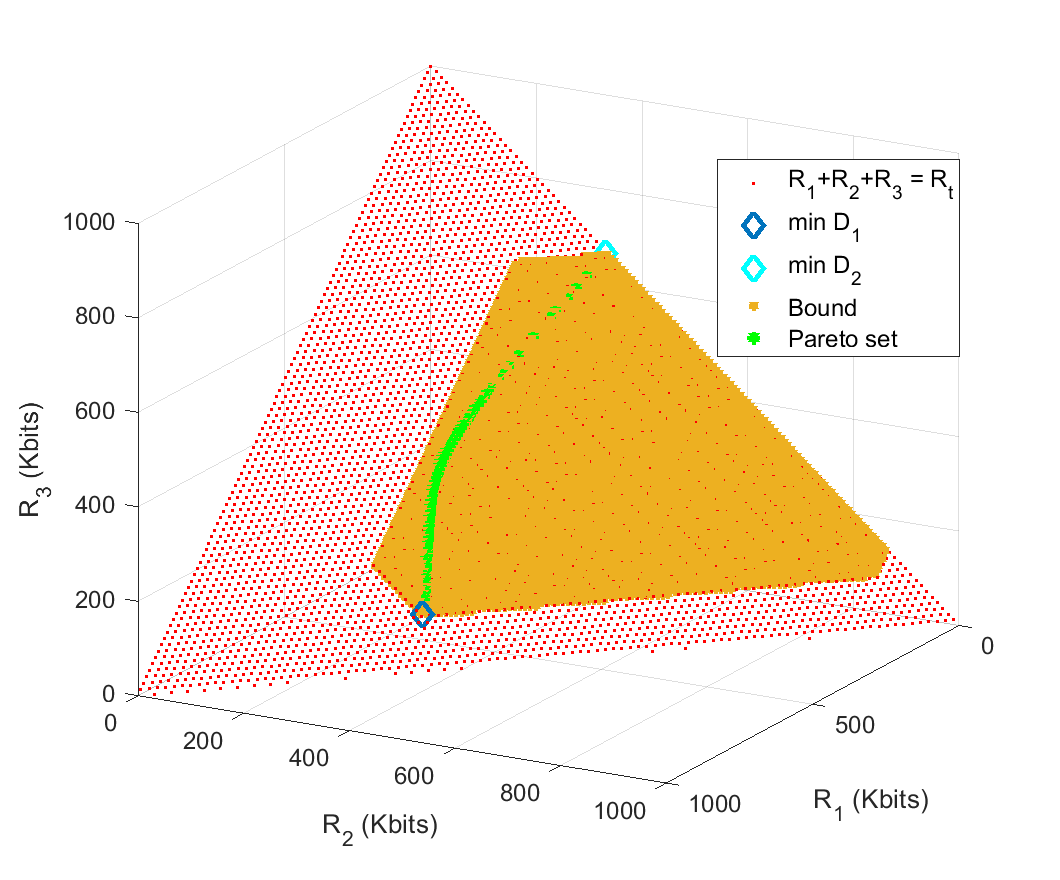}
\caption{$3 \times 2$ CI system: red dots represent the $R_1+R_2+R_3=R_t=1000$ Kbits plane;  orange hexagon is the bound on the Pareto set from Theorem~\ref{th:3stream_pareto}; 
green dots are the numerically-obtained samples from the Pareto set; blue and cyan diamonds correspond to the minima of two task distortions.  
}
\label{fig:m4_3rates_results}
\end{figure}

\begin{table}[h]
\centering
\caption{
Dimensions of the input image and deep feature tensors for the single-task (DenseNet) model.}
\label{tbl:feature_dimension_densenet}
\begin{tabular}{|c|c|}
\hline
            & Height$\times$Width$\times$Channels \\ \hline \hline
Input image & 224 $\times$ 224 $\times$ 3                  \\ \hline
Tensor 1 (layer 15)    & 28 $\times$ 28 $\times$128                      \\ \hline
Tensor 2 (layer 17)    & 28$\times$28$\times$32                       \\ \hline
\end{tabular}
\end{table}

\begin{table}[h]
\centering
\caption{
Dimensions of the input image and deep feature tensors for the  multi-task model.}
\label{tbl:feature_dimension_multi}
\begin{tabular}{|c|c|}
\hline
            & Height$\times$Width$\times$Channels \\ \hline \hline
Input image & 256 $\times$ 512 $\times$ 3                  \\ \hline
Tensor 1 (layer 36)    & 32 $\times$ 64 $\times$ 256                      \\ \hline
Tensor 2 (layer 61)    & 16 $\times$ 32 $\times$ 512                       \\ \hline
\end{tabular}
\end{table}

\section{Conclusion}
\label{sec:conclusion}
In this paper we studied the bit allocation problem for multi-stream multi-task CI systems. 
A convex approximation to the distortion-rate surface was proposed, which led to the closed-form solution for bit allocation in single-task systems and scalarized multi-task systems. In addition, analytical characterization of the Pareto set was obtained for 2-stream multi-task systems, and a bound on the Pareto set was derived for 3-stream 2-task systems.  

The analysis was supplemented by experiments on several CI systems involving tasks such as image classification, semantic segmentation, disparity estimation, and input reconstruction. The obtained bit allocation solutions were compared against several alternatives. The results demonstrated the advantages of the presented solutions compared to the alternatives. In addition, Pareto set was examined in several cases to obtain further insight into the nature of the bit allocation problem for CI systems. 
\begin{appendices}
\section{Proof of Theorem \ref{th:2stream_pareto}}
\label{appendix:2stream_proof}
\textbf{Theorem.} For a 2-stream $k$-task CI  system, any point on the line $R_1+R_2=R_t$ between $(R_1^{\min},R_2^{\max})$ and $(R_1^{\max},R_2^{\min})$ is Pareto-optimal. Moreover, there are no Pareto-optimal solutions outside of this line segment.
\begin{proof}
Let us parametrize the line $R_1 + R_2 = R_t$ as $R_1 = r$, $R_2 = R_t - r$. This way, the intersection of each distortion surface $D_i(\mathbf{R})$ and the plane defined by $R_1 + R_2 = R_t$ becomes a function of one variable, $D_i(r)$. From~(\ref{eq:dist_total_form}), distortion surfaces are strictly convex, and so are distortion curves $D_i(r)$. We will use the fact that a strictly convex function keeps increasing as we move further away from its minimum, on either side.  Let $r_{\min}=R_1^{\min}$ and $r_{\max}=R_1^{\max}$. According to the setup of the theorem, these are the points at which some distortion curves reach their minimum. Let $r_{\min}$ be the minimum of $D_l(r)$ and $r_{\max}$ be the minimum of $D_u(r)$. 

First, we show that every point between $r_{\min}$ and $r_{\max}$ is Pareto-optimal. Let $r'$ be any point between $r_{\min}$ and $r_{\max}$.   
Assume that $r'$ is \emph{not} a Pareto-optimal solution. In this case, there must exist another point $r''$ such that 
\begin{equation}
\label{eq:counter_condition}
    \forall i, \, D_i(r'') \leq D_i(r') \quad \text{and} \quad \exists j, \, D_j(r'') < D_j(r').
\end{equation}
Let $r''$ be such a point. There are two possibilities: either $r'' < r'$ or $r'' > r'$. If the former is the case, then, because $r_{\min} \leq r' \leq r_{\max}$, it must be that $r'' < r' \leq r_{\max}$. Then, by the strict convexity of $D_u(r)$ and the fact that $r_{\max}$ is the minimum of $D_u(r)$, we have
\begin{equation}
\label{eq:upper_contradiction}
    D_u(r'') > D_u(r') \geq D_u (r_{\max}).
\end{equation}
On the other hand, if $r'' > r'$, then $r_{\min}<r'<r''$. In this case, by the strict convexity of $D_l(r)$ and the fact that $r_{\min}$ is the minimum of $D_l(r)$, we have 
\begin{equation}
\label{eq:lower_contradiction}
    D_l(r'') > D_l(r') \geq D_l (r_{\min}).
\end{equation}
From~(\ref{eq:upper_contradiction}) and~(\ref{eq:lower_contradiction}) we conclude that there is at least one $i$ such that $D_i(r'')>D_i(r')$, thereby contradicting~(\ref{eq:counter_condition}). Hence, $r'$ is Pareto-optimal.

To complete the proof, we show that no point outside $[r_{\min},r_{\max}]$ can be Pareto-optimal. Recall that by the definition of $r_{\min}$ and $r_{\max}$ (more specifically, by the definition of $R_1^{\min}$ and $R_1^{\max}$ in Section~\ref{sec:Pareto_2xk}), the minimum of each distortion curve $D_i(r)$ lies in $[r_{\min},r_{\max}]$. Let $r'$ be any given point outside $[r_{\min},r_{\max}]$ and assume that it \emph{is} Pareto-optimal. In that case, there must \emph{not} exist another point $r''$ such that~(\ref{eq:counter_condition}) is true. 

Since $r' \notin [r_{\min},r_{\max}]$, then either $r'<r_{\min}$ or $r'>r_{\max}$. If the former is the case, then $r'$ is ``to the left'' of all distortion minima. Hence, by the strict convexity of $D_i(r)$, we have
\begin{equation}
\label{eq:left_contradiction}
    \forall i, \, D_i(r') > D_i(r_{\min}) \geq \min_r D_i (r),
\end{equation}
so $r''=r_{\min}$ is the point that satisfies~(\ref{eq:counter_condition}). On the other hand, if $r'>r_{\max}$, then $r'$ is ``to the right'' of all distortion minima, so by the strict convexity of $D_i(r)$ we have
\begin{equation}
\label{eq:right_contradiction}
    \forall i, \, \min_r D_i (r) \leq D_i(r_{\max}) < D_i(r'),
\end{equation}
so $r''=r_{\max}$ is the point that satisfies~(\ref{eq:counter_condition}). Either way, we can find a point that satisfies~(\ref{eq:counter_condition}), so $r'$ cannot be Pareto-optimal. This completes the proof.

\end{proof}



\section{Extrema of $R_1, R_2$, and $R_3$ for a $3 \times 2$ CI system}
\label{appendix:extreme_3x2}

\begin{equation}
\small
\label{eq:r1_ext}
\begin{gathered}
R_1^{ext1} = \frac{R_t\beta_{2,3} \beta_{2,2} - \beta_{2,2} \log_{2}(\alpha_{2,3} \beta_{2,3}) - \beta_{2,3} \log_{2}(\alpha_{2,2} \beta_{2,2})}{ \beta_{2,1} \beta_{2,2} + \beta_{2,1} \beta_{2,3} + \beta_{2,3} \beta_{2,2} } +\\
+ \frac{\beta_{2,2} \log_{2}(\alpha_{2,1} \beta_{2,1}) +  \beta_{2,3} \log_{2}(\alpha_{2,1} \beta_{2,1})}{ \beta_{2,1} \beta_{2,2} + \beta_{2,1} \beta_{2,3} + \beta_{2,3} \beta_{2,2} }\\
R_1^{ext2} = \frac{R_t\beta_{2,3} \beta_{1,2} - \beta_{1,2} \log_{2}(\alpha_{2,3} \beta_{2,3}) - \beta_{2,3} \log_{2}(\alpha_{1,2} \beta_{1,2})}{ \beta_{1,2} \beta_{2,1} + \beta_{1,1} \beta_{2,3} + \beta_{2,3} \beta_{1,2} } +\\
\frac{  \beta_{1,2} \log_{2}(\alpha_{2,1} \beta_{2,1}) +  \beta_{2,3} \log_{2}(\alpha_{1,1} \beta_{1,1})}{ \beta_{1,2} \beta_{2,1} + \beta_{1,1} \beta_{2,3} + \beta_{2,3} \beta_{1,2} } \\
R_1^{ext3} = \frac{R_t\beta_{1,3} \beta_{2,2} - \beta_{2,2} \log_{2}(\alpha_{1,3} \beta_{1,3}) - \beta_{1,3} \log_{2}(\alpha_{2,2} \beta_{2,2}) }{ \beta_{1,1} \beta_{2,2} + \beta_{2,1} \beta_{1,3} + \beta_{1,3} \beta_{2,2} } +\\
\frac{  \beta_{2,2} \log_{2}(\alpha_{1,1} \beta_{1,1}) +  \beta_{1,3} \log_{2}(\alpha_{2,1} \beta_{2,1})}{ \beta_{1,1} \beta_{2,2} + \beta_{2,1} \beta_{1,3} + \beta_{1,3} \beta_{2,2} }\\
R_1^{ext4} = \frac{R_t\beta_{1,3} \beta_{1,2} - \beta_{1,2} \log_{2}(\alpha_{1,3} \beta_{1,3}) - \beta_{1,3} \log_{2}(\alpha_{1,2} \beta_{1,2})}{ \beta_{1,1} \beta_{1,2} + \beta_{1,1} \beta_{1,3} + \beta_{1,3} \beta_{1,2} } +\\
\frac{\beta_{1,2} \log_{2}(\alpha_{1,1} \beta_{1,1}) +  \beta_{1,3} \log_{2}(\alpha_{1,1} \beta_{1,1})}{ \beta_{1,1} \beta_{1,2} + \beta_{1,1} \beta_{1,3} + \beta_{1,3} \beta_{1,2} } \\
R_1^{\min} = \min\big\{ R_1^{ext1}, R_1^{ext2}, R_1^{ext3}, R_1^{ext4}, R_t \big\} \\
R_1^{\max} = \max\big\{ R_1^{ext1}, R_1^{ext2}, R_1^{ext3}, R_1^{ext4}, 0 \big\}
\end{gathered}
\end{equation}

\small
\begin{equation}
\label{eq:r2_ext}
\begin{gathered}
R_2^{ext1} = \frac{R_t\beta_{2,1} \beta_{2,3} - \beta_{2,3} \log_{2}(\alpha_{2,1} \beta_{2,1}) - \beta_{2,1} \log_{2}(\alpha_{2,3} \beta_{2,3})}{ \beta_{2,2} \beta_{2,3} + \beta_{2,2} \beta_{2,1} + \beta_{2,1} \beta_{2,3} } +\\
\frac{ \beta_{2,1} \log_{2}(\alpha_{2,2} \beta_{2,2}) +  \beta_{2,3} \log_{2}(\alpha_{2,2} \beta_{2,2})}{ \beta_{2,2} \beta_{2,3} + \beta_{2,2} \beta_{2,1} + \beta_{2,1} \beta_{2,3} }\\
R_2^{ext2} = \frac{R_t\beta_{2,1} \beta_{1,3} - \beta_{1,3} \log_{2}(\alpha_{2,1} \beta_{2,1}) - \beta_{2,1} \log_{2}(\alpha_{1,3} \beta_{1,3})}{ \beta_{1,3} \beta_{2,2} + \beta_{1,2} \beta_{2,1} + \beta_{2,1} \beta_{1,3} } +\\
\frac{\beta_{1,3} \log_{2}(\alpha_{2,2} \beta_{2,2}) +  \beta_{2,1} \log_{2}(\alpha_{1,2} \beta_{1,2})}{ \beta_{1,3} \beta_{2,2} + \beta_{1,2} \beta_{2,1} + \beta_{2,1} \beta_{1,3} }\\
R_2^{ext3} = \frac{R_t\beta_{1,1} \beta_{2,3} - \beta_{2,3} \log_{2}(\alpha_{1,1} \beta_{1,1}) - \beta_{1,1} \log_{2}(\alpha_{2,3} \beta_{2,3})}{ \beta_{1,2} \beta_{2,3} + \beta_{1,1} \beta_{2,2} + \beta_{1,1} \beta_{2,3} } +\\
\frac{ \beta_{2,3} \log_{2}(\alpha_{12} \beta_{1,2}) +  \beta_{1,1} \log_{2}(\alpha_{2,2} \beta_{2,2})}{ \beta_{1,2} \beta_{2,3} + \beta_{1,1} \beta_{2,2} + \beta_{1,1} \beta_{2,3} } \\
R_2^{ext4} = \frac{R_t\beta_{1,1} \beta_{1,3} - \beta_{1,3} \log_{2}(\alpha_{1,1} \beta_{1,1}) - \beta_{1,1} \log_{2}(\alpha_{1,3} \beta_{1,3})}{ \beta_{1,2} \beta_{1,3} + \beta_{1,1} \beta_{1,2} + \beta_{1,1} \beta_{1,3} } +\\
\frac{\beta_{1,1} \log_{2}(\alpha_{1,2} \beta_{1,2}) +  \beta_{1,3} \log_{2}(\alpha_{1,2} \beta_{1,2})}{ \beta_{1,2} \beta_{1,3} + \beta_{1,1} \beta_{1,2} + \beta_{1,1} \beta_{1,3} }\\
R_2^{\min} = \min\big\{ R_2^{ext1}, R_2^{ext2}, R_2^{ext3}, R_2^{ext4}, R_t \big\} \\
R_2^{\max} = \max\big\{ R_2^{ext1}, R_2^{ext2}, R_2^{ext3}, R_2^{ext4}, 0 \big\}
\end{gathered}
\end{equation}

\begin{equation}
\label{eq:r3_ext}
\begin{gathered}
R_3^{ext1} = \frac{R_t\beta_{2,1} \beta_{2,2} - \beta_{2,2} \log_{2}(\alpha_{2,1} \beta_{2,1}) - \beta_{2,1} \log_{2}(\alpha_{2,2} \beta_{2,2}) }{ \beta_{2,3} \beta_{2,2} + \beta_{2,3} \beta_{2,1} + \beta_{2,1} \beta_{2,2} } +\\
\frac{\beta_{2,2} \log_{2}(\alpha_{2,3} \beta_{2,3}) +  \beta_{2,1} \log_{2}(\alpha_{2,3} \beta_{2,3})}{ \beta_{2,3} \beta_{2,2} + \beta_{2,3} \beta_{2,1} + \beta_{2,1} \beta_{2,2} } \\
R_3^{ext2} = \frac{R_t\beta_{2,1} \beta_{1,2} - \beta_{1,2} \log_{2}(\alpha_{2,1} \beta_{2,1}) - \beta_{2,1} \log_{2}(\alpha_{1,2} \beta_{1,2})}{ \beta_{2,3} \beta_{1,2} + \beta_{1,3} \beta_{2,1} + \beta_{2,1} \beta_{1,2} } +\\
\frac{  \beta_{1,2} \log_{2}(\alpha_{2,3} \beta_{2,3}) +  \beta_{2,1} \log_{2}(\alpha_{1,3} \beta_{1,3})}{ \beta_{2,3} \beta_{1,2} + \beta_{1,3} \beta_{2,1} + \beta_{2,1} \beta_{1,2} }\\
R_3^{ext3} = \frac{R_t\beta_{1,1} \beta_{2,2} - \beta_{2,2} \log_{2}(\alpha_{1,1} \beta_{1,1}) - \beta_{1,1} \log_{2}(\alpha_{2,2} \beta_{2,2})}{ \beta_{1,3} \beta_{2,2} + \beta_{2,3} \beta_{1,1} + \beta_{1,1} \beta_{2,2} } +\\
\frac{\beta_{2,2} \log_{2}(\alpha_{1,3} \beta_{1,3}) +  \beta_{1,1} \log_{2}(\alpha_{2,3} \beta_{2,3})}{ \beta_{1,3} \beta_{2,2} + \beta_{2,3} \beta_{1,1} + \beta_{1,1} \beta_{2,2} }\\
R_3^{ext4} = \frac{R_t\beta_{1,1} \beta_{1,2} - \beta_{1,2} \log_{2}(\alpha_{1,1} \beta_{1,1}) - \beta_{1,1} \log_{2}(\alpha_{1,2} \beta_{1,2}) }{ \beta_{1,3} \beta_{1,2} + \beta_{1,3} \beta_{1,1} + \beta_{1,1} \beta_{1,2}} +\\
\frac{\beta_{1,2} \log_{2}(\alpha_{1,3} \beta_{1,3}) +  \beta_{1,1} \log_{2}(\alpha_{1,3} \beta_{1,3})}{ \beta_{1,3} \beta_{1,2} + \beta_{1,3} \beta_{1,1} + \beta_{1,1} \beta_{1,2}}\\
R_3^{\min} = \min\big\{R_3^{ext1}, R_3^{ext2}, R_3^{ext3}, R_3^{ext4}, R_t \big\} \\ 
R_3^{\max} = \max\big\{ R_3^{ext1}, R_3^{ext2}, R_3^{ext3}, R_3^{ext4}, 0 \big\}
\end{gathered}
\end{equation}

\normalsize
\section{Proof of Theorem \ref{th:3stream_pareto}}
\label{appendix:3stream_proof}
\textbf{Theorem.} For a 3-stream 2-task CI system, there are no Pareto-optimal solutions outside of $\mathcal{C} \, \cap \, \mathcal{P}$ (Fig.~\ref{fig:cube}).

\begin{proof}
Let $\mathbf{R}'=(R'_1,R'_2)$ be any given point outside $\mathcal{C} \, \cap \, \mathcal{P}$. Assume that $\mathbf{R}'$ \emph{is}  Pareto-optimal for~(\ref{eq:rdo_mop}). If that is the case, we must have $\mathbf{R}' \in S$,  and by Theorem~\ref{th:kkt}, $\mathbf{R}'$ must satisfy KKT conditions~(\ref{eq:kkt}) for some  $\mathbf{w}=(w_1,w_2)$ with $w_1,w_2>0$. However, by construction (Section~\ref{sec:Pareto_3x2}), $\mathcal{C} \, \cap \, \mathcal{P}$ includes all feasible points $\mathbf{R} \in S$ that satisfy KKT conditions~(\ref{eq:mop_system}) with $\frac{w_1}{w_2} >0$, which includes all $w_1,w_2>0$. Hence, $\mathbf{R}'$ cannot be Pareto-optimal.  
\end{proof}

\end{appendices}

%




\ifCLASSOPTIONcaptionsoff
  \newpage
\fi



\bibliographystyle{IEEEtran}
\bibliography{IEEEabrv,refs}
%

%
\begin{IEEEbiography}[{\includegraphics[width=1in,height=1.25in,clip,keepaspectratio]{./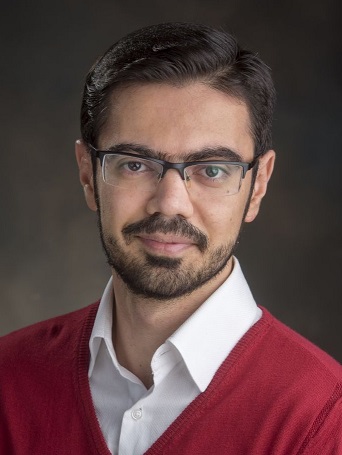}}]{Saeed Ranjbar Alvar}
(Student Member, IEEE) received the B.Sc degree in electrical engineering from University of Tabriz, Iran, in 2014 and the M.Sc. degree in electrical and electronics engineering from Middle East Technical University, Turkey, in 2016. He is currently a Ph.D. candidate at Simon Fraser University, Canada. He is the recipient of the best student paper award and the IEEE SPS travel grant at ICIP’19 in Taipei, Taiwan. His current research interests include Deep learning, image/video processing and computer vision.
\end{IEEEbiography}

\begin{IEEEbiography}[{\includegraphics[height=1.2in,keepaspectratio]{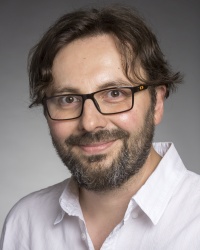}}]{Ivan V. Baji\'{c}} (S'99-M'04-SM'11) received the Ph.D. degree in electrical engineering from Rensselaer Polytechnic Institute, Troy, NY, in 2003. He is a Professor of Engineering Science and co-director of the Multimedia Lab at Simon Fraser University, Burnaby, BC, Canada. His research interests include signal processing and machine learning with applications to multimedia processing, compression, and collaborative intelligence. His  papers have received awards at ICME 2012 and ICIP 2019, and other recognitions (e.g., paper award finalist, top n\%) at Asilomar, ICIP, ICME, and CVPR. He is the vice chair of the IEEE Multimedia Signal Processing Technical Committee and an elected member of the IEEE Multimedia Systems and Applications Technical Committee. He was an Associate Editor of \textsc{IEEE Transactions on Multimedia} and \textsc{IEEE Signal Processing Magazine}, and is currently a Senior Area Editor of \textsc{IEEE Signal Processing Letters}. 
\end{IEEEbiography}







\end{document}